




\documentclass{ecai} 



\usepackage{latexsym}
\usepackage{amssymb}
\usepackage{amsmath}
\usepackage{amsthm}
\usepackage{booktabs}
\usepackage{enumitem}
\usepackage{graphicx}
\usepackage{color}
\usepackage{mathtools}
\usepackage{thm-restate}



\newtheorem{theorem}{Theorem}
\newtheorem{lemma}[theorem]{Lemma}
\newtheorem{corollary}[theorem]{Corollary}
\newtheorem{proposition}[theorem]{Proposition}

\newtheorem{problem}{Problem}



\newcommand{\BibTeX}{B\kern-.05em{\sc i\kern-.025em b}\kern-.08em\TeX}


\begin{document}


\begin{frontmatter}


\paperid{2268} 


\title{Existence of MMS Allocations of Mixed Manna}


\author[A]{\fnms{Kevin}~\snm{Hsu}\orcid{0000-0002-8932-978X}\thanks{Corresponding Author. Email: kevinhsu996@gmail.com.}}

\address[A]{University of Victoria}


\begin{abstract}
Maximin share (MMS) allocations are a popular relaxation of envy-free allocations that have received wide attention in the fair division of indivisible items. Although MMS allocations of goods can fail to exist, previous work has found conditions under which they exist. Specifically, MMS allocations of goods exist whenever $m \leq n+5$, and this bound is tight in the sense that they can fail to exist when $m = n+6$. The techniques used to obtain these results do not apply to the mixed manna setting, leaving the question of whether similar results hold for the general setting. This paper addresses this by introducing new techniques to handle these settings. In particular, we are able to answer this question completely for the chores setting, and partially for the mixed manna setting. An agent $i$ is a {\em chores agent} if it considers every item to be a chore and a {\em non-negative agent} if its MMS guarantee is non-negative. In this paper, we prove that an MMS allocation exists as long as $m \leq n+5$ and either (i) every agent is a chores agent, or (ii) there exists a non-negative agent. In addition, for $n \leq 3$, we also prove that an MMS allocation exists as long as $m \leq n+5$, regardless of the types of agents. To the best of our knowledge, these are the first non-trivial results pertaining to the existence of exact MMS allocations in the mixed manna setting.
\end{abstract}

\end{frontmatter}


\section{Introduction}

Fair division is the study of fairness in relation to the allocation of resources and responsibilities and the mechanisms for finding fair allocations. Since resources are often limited, methods of finding fair allocations have been sought since times immemorial. The theory of fair division depends immensely on the nature of the items being divided, such as their desirability and divisibility. For example, half of a movie ticket is a useless piece of paper, so movie tickets are considered to be indivisible. On the other hand, an electricity bill can be split arbitrarily between housemates, so it is divisible but undesirable. This paper is concerned with the division of indivisible items. In this setting, highly desirable classical fairness notions such as envy-freeness and proportionality are oftentimes unattainable.  For example, when dividing a single good between two agents, any allocation would leave one agent envious of the other. This has led to the introduction of a number of alternative fairness notions. The most popular among these include envy-freeness up to one good (EF1) and  maximin share (MMS) introduced by \citet{budish2011combinatorial}, and envy-freeness up to any good (EFX) introduced by \citet{gourves2014near} under the name of {\em near envy-freeness} and also by \citet{caragiannis2019unreasonable}.

In this paper, we focus on MMS allocations. Informally, an allocation is said to be MMS if it provides each agent with its {\em MMS guarantee}, which is the utility that an agent would receive if it is able to decide how the items are divided into bundles, but must receive the least valuable bundle. Although MMS allocations can fail to exist \citep{KPW16}, they have been shown to exist in some restricted settings when the agents have additive utility functions. The vast majority of the existing results have focused on goods division when the agents have additive utility functions. In particular, \citet{BL16} showed that when dividing goods (i.e. items with non-negative utility), MMS allocations exist if $n=2$ or $m \leq n+3$, where $n, m$ are the numbers of agents and goods, respectively. Later, \citet{KPW16} improved this bound to $m \leq n+4$ while also showing that for all $n \geq 3$, there exists an $n$-agent $(3n+4)$-good fair division instance that admits no MMS allocations. A tight result was obtained by \citet{feige2021tight} who improved the bound to $m \leq n+5$ and showed that this is the best possible upper bound on $m$ of the form $n+k$ where $k \in \mathbb{Z}$ by exhibiting an instance with $(n, m) = (3, 9)$ for which an MMS allocation fails to exist.

The common underlying approach to much of the above work relies on reducing a given instance of the fair division problem into a smaller sub-instance with the usage of {\em reduction rules}. Essentially, given an instance of the fair division problem, a reduction rule $R$ allocates some of the items to some of the agents, and returns the resulting sub-instance. If the agents eliminated by $R$ are satisfied with respect to their respective MMS guarantees, and if the MMS guarantees of the remaining agents are preserved (i.e. do not decrease after applying $R$), then we can inductively consider the smaller sub-instance. The above mentioned work that led to the $m \leq n+5$ bound makes use of a reduction rule that allocates a single good to a single agent, which relies on the following fact.

\begin{proposition}\label{prop:delete_one_good}\citep{BL16}
	Let $I$ be an instance of the fair division problem. Suppose all agents have additive utility functions and $I$ contains only goods. Let $I_R$ be an instance obtained from $I$ by deleting an arbitrary agent and an arbitrary good. Then, for any agent $i$ in $I_R$, we have
	$$u_i^\text{MMS}(I) \leq u_i^\text{MMS}(I_R)$$
	where $u_i^\text{MMS}(I)$ and $u_i^\text{MMS}(I_R)$ denote the MMS guarantee of agent $i$ in instances $I$ and $I_R$, respectively. \qed
\end{proposition}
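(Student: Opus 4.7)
The plan is to exhibit an explicit $(n-1)$-partition of the items in $I_R$ whose minimum bundle value under agent $i$ is at least $u_i^{\text{MMS}}(I)$; the MMS guarantee of agent $i$ in $I_R$, being the maximum min-bundle value over all $(n-1)$-partitions, will then automatically be at least $u_i^{\text{MMS}}(I)$.

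First I would fix agent $i$ (who remains in $I_R$) and let $(B_1, \ldots, B_n)$ be an MMS-defining $n$-partition of the items of $I$ from $i$'s perspective, so that $\min_\ell u_i(B_\ell) = u_i^{\text{MMS}}(I)$. Let $g$ be the deleted good and let $B_j$ be the (unique) bundle of the partition containing $g$. Since we must shed one good and one bundle, I would simply remove $g$ from $B_j$ and then merge the remainder $B_j \setminus \{g\}$ into any other bundle $B_k$ with $k \neq j$, producing an $(n-1)$-partition $(B'_\ell)_{\ell \neq j}$ of the items of $I_R$, where $B'_k = B_k \cup (B_j \setminus \{g\})$ and $B'_\ell = B_\ell$ for $\ell \notin \{j, k\}$.

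Next I would verify that no bundle in this new partition has value below $u_i^{\text{MMS}}(I)$. The bundles untouched by the construction trivially retain their original values, which are at least $u_i^{\text{MMS}}(I)$ by choice of the partition. For the merged bundle $B'_k$, additivity gives $u_i(B'_k) = u_i(B_k) + u_i(B_j) - u_i(g)$. Since items are goods we have $u_i(g) \leq u_i(B_j)$, so $u_i(B'_k) \geq u_i(B_k) \geq u_i^{\text{MMS}}(I)$. Hence $\min_{\ell \neq j} u_i(B'_\ell) \geq u_i^{\text{MMS}}(I)$, and taking the maximum over all $(n-1)$-partitions yields $u_i^{\text{MMS}}(I_R) \geq u_i^{\text{MMS}}(I)$.

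There is no serious obstacle here; the only subtlety is that the argument leans essentially on the non-negativity of $u_i(g)$ (so that removing $g$ never \emph{increases} the need to compensate) together with the non-negativity of $u_i(B_j \setminus \{g\})$ (so that appending leftover items to $B_k$ never hurts). This is precisely why the proposition, as stated, is restricted to instances containing only goods, and it is exactly this step that fails in the chores and mixed manna settings that motivate the rest of the paper.
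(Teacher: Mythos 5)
Your proof is correct and is exactly the standard argument for this fact: delete the good from its bundle in an MMS-defining $n$-partition for agent $i$, merge the leftover into another bundle, and use non-negativity of goods to see that no bundle's value drops below $u_i^{\text{MMS}}(I)$. The paper itself states this proposition without proof (citing Bouveret and Lemaître), so there is nothing to contrast it with; your closing remark correctly identifies the non-negativity step as the one that breaks in the chores and mixed manna settings, which is precisely the point the paper makes in Appendix A.
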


Unfortunately for chores and mixed manna settings, deleting an arbitrary agent and an arbitrary item does not always provide the same guarantee (an example is provided in Appendix \ref{appen:A}). This raises the question of whether MMS allocations exist in the chores and mixed manna settings under the same condition of $m \leq n+5$. The purpose of this paper is to address this question.

In order to obtain our new results, we used a combination of existing techniques based on reduction rules and a new technique based on modifying the utility functions of the agents. At first glance, modifying utility functions may seem counterintuitive because doing so changes the problem instance. Even if the altered instance admits an MMS allocation, such an allocation is not necessarily MMS for the original instance. However, we introduce a way of modifying utility functions such that any MMS allocation for the altered instance can be used to find an MMS allocation for the original instance (see Proposition \ref{prop:mimic}).

\subsection{Our Contribution}

An agent is said to be a {\em goods agent} if every item provides non-negative utility to it. Similarly, an agent is said to be a {\em chores agent} if every item provides non-positive utility to it. If an agent is neither a goods agent nor a chores agent, it is said to be a {\em mixed agent}. Given an instance $I$ of the fair division problem and an agent $i$, we define the {\em MMS guarantee} of agent $i$ to be the utility (in the view of agent $i$) of a minimum-utility bundle among all possible allocations. An agent is {\em non-negative} if its MMS guarantee is non-negative.

Our main contribution in this paper is the following theorem.

\begin{theorem}\label{thm:main}
	Let $I$ be an instance of the fair division problem. Suppose $m \leq n+5$ and at least one of the following conditions hold.
	\begin{enumerate}
		\item $I$ contains $n \leq 3$ agents.
		\item $I$ contains a non-negative agent.
		\item $I$ contains only chores agents.
	\end{enumerate}
	Then, $I$ admits an MMS allocation.
\end{theorem}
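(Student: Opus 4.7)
The plan is to handle each of the three conditions separately, with the common goal of reducing the situation (or significant pieces of it) to the purely goods-only setting, for which the $m \le n+5$ theorem of Feige et al. already applies. Throughout, induction on $n$ is the natural outer loop, with the bases $n = 1$ (trivial) and $n = 2$ (classical even for mixed manna) available for free.

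For condition 3 (chores-only instances), I would develop a chores analog of Proposition~\ref{prop:delete_one_good}. Concretely, I want a reduction rule that selects one agent $i$ and one chore $c$, assigns $c$ to $i$, and leaves the remaining $(n{-}1)$-agent instance $I_R$ with the property that $u_j^{\text{MMS}}(I_R) \ge u_j^{\text{MMS}}(I)$ for every remaining agent $j$. A natural candidate is to take $i$ to be an agent some of whose MMS bundles are singletons and $c$ to be such a singleton chore: then $i$'s own MMS guarantee is exactly $u_i(c)$, so it is satisfied, and for any other $j$ one must argue that $j$'s MMS partition can be repaired after removing $c$ without losing value. The bound $m \le n+5$ is preserved under this reduction, so induction closes the case.

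For condition 2 (presence of a non-negative agent), I would invoke the mimic technique of Proposition~\ref{prop:mimic} to transform the given instance into one in which every agent is effectively a goods agent and then inherit an MMS allocation from the Feige et al. result. The non-negative agent is precisely what makes the modification legitimate: because its MMS guarantee is already non-negative, one can truncate its negative utilities (and, after handling any chores agents by a similar localized modification) in a way the mimic property tolerates, and pull back the resulting goods-only MMS allocation to an MMS allocation of the original instance. For condition 1 ($n \le 3$), only $n=3$ with $m \le 8$ requires work; here I would case-split on the profile of agent types so that any instance containing a non-negative agent is absorbed by condition 2 and any instance of only chores agents is absorbed by condition 3. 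The residual pocket is ``every agent is a chores or mixed agent, at least one is mixed, and no agent is non-negative,'' which is small enough to handle by a direct, hand-crafted construction that once again reduces to the goods-only bound after a bounded number of allocations.

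I expect the main obstacle to be condition 3. The Bouveret–Lemaître argument behind Proposition~\ref{prop:delete_one_good} genuinely uses positivity of utilities: discarding any good from an MMS bundle only improves the other agents' options, and no direct analog of this monotonicity holds for chores (as the counterexample in Appendix~\ref{appen:A} confirms). Securing the single-chore reduction rule therefore requires a separate structural analysis of MMS partitions in chores instances, most likely leveraging the existence of many singleton bundles that must appear whenever $m - n$ is small. Once this analysis is in place, conditions 1 and 2 should follow with substantially less effort through the mimic-based reduction to the goods-only regime.
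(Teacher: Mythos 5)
There are several genuine gaps. First, your treatment of condition 2 misreads what the mimic technique does. Proposition~\ref{prop:mimic} does not convert the instance into a goods-only one: the mimicked instance $I^{(i)}$ replaces the utility functions of the agents with non-positive MMS guarantee by $u_i$ itself, so the items keep whatever signs they had and every agent merely ends up with a \emph{positive MMS guarantee}. Your proposed alternative --- truncating negative utilities to reach the goods-only regime and then invoking the Feige et al.\ bound --- is not justified by anything like Proposition~\ref{prop:mimic}: truncation inflates the MMS threshold, and a bundle that clears the inflated threshold under truncated utilities can contain enough chores to fall below the original threshold under the original utilities, so the pull-back step fails. The paper never reduces to the goods-only theorem here; after mimicking, it closes the induction with the reduction rules of Lemmas~\ref{lemma:1reduction} and~\ref{lemma:atmost2reduction} (the latter being the Hall's-theorem argument on partitions with $n-1$ bundles of size at most $2$, which covers exactly the case $2n \le m \le n+5$ where no singleton bundle exists).

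Second, your single-chore reduction for condition 3 is missing its two load-bearing ingredients. When every MMS partition has an empty or singleton bundle, the item to remove must be the globally least valuable item $o_m$ under SOP (not a singleton of the receiving agent's partition): the receiving chores agent is satisfied because $u_1(o_m) \ge u_1(\pi^1_1)$, and every other agent $j$ is preserved because $u_j(o_m) \le u_j(\pi^j_\ell)$ for its empty or singleton bundle $\pi^j_\ell$, feeding into Proposition~\ref{prop:1reduction_preserve}(1). When some agent's MMS partition has \emph{no} empty or singleton bundle --- which does happen for $m \le 2n+1$ --- you need the size-$\le 2$ matching argument again, and your sketch has no substitute for it. Finally, your plan for $n \le 3$ is circular: the inductive proofs of conditions 2 and 3 bottom out at the $n=3$ base case, so you cannot in turn delegate $n=3$ to those conditions; and the ``residual pocket'' of all-negative instances with a mixed agent is not a small leftover but the substance of the base case, which the paper handles directly via the bundle-disjointness graphs $G(\pi^i,\pi^j)$, Corollary~\ref{cor:2edges}, and Proposition~\ref{prop:39}.
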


Notably, Theorem \ref{thm:main}(3) is an existence result for the chores setting. On the other hand, the class of goods agents is strictly contained in the class of non-negative agents, because goods agents are necessarily non-negative. Thus, Theorem \ref{thm:main}(2) strictly generalizes the $m \leq n+5$ result concerning goods division due to \citet{feige2021tight}.

\subsection{Related Work}

This paper is concerned with establishing the existence of exact MMS allocations by considering restricted types of instances, specifically by limiting the number of items relative to the number of agents. Below, we discuss some other recent work related to MMS allocations. For a more comprehensive survey of other fairness concepts, we refer the reader to the survey by \citet{amanatidis2022fair}.

\textbf{Limited number of items.}  A recent paper by \citet{hummel2023lower} in a similar vein as this paper showed that the tightness of the bound $m \leq n+5$ can be overcome by considering asymptotic behaviour. Specifically, Hummel showed that for any integer $c > 0$, there exists an integer $n_c$ such that any instance with $n \geq n_c$ agents and $m \leq n + c$ goods admits an MMS allocation. Furthermore, the same paper also showed that for $n \neq 3$, an MMS allocation of goods exists as long as $m \leq n+6$.

\textbf{Restricted utility functions.} Rather than limiting the number of items, domain restriction by way of restricting the types of utility functions has also been explored. For example, \citet{barman2021existence} showed that MMS allocations of goods exist if all agents have matroid-rank utility functions and can be found in polynomial-time. Interestingly, the MMS allocations that they found are also Pareto-optimal. In the same paper, they complemented this positive result by showing that MMS allocations can fail to exist if agents have XOS utilities, which are a generalization of matroid-rank utilities. On the other hand, \citet{hosseini2023fairly} showed that for lexicographic utilities, MMS allocations of mixed manna exist and can be computed in polynomial-time.

\textbf{Approximations.} Since exact MMS allocations are not guaranteed to exist, much effort has been dedicated to finding approximate MMS allocations. An allocation is $\alpha$-MMS if it provides each agent an $\alpha$ multiplicative factor of its MMS guarantee.
For goods, the current state-of-the-art result is due to \citet{akrami2024breaking}, who showed that an $(\frac{3}{4} + \frac{3}{3836})$-MMS allocation exists. On the negative side, \citet{feige2021tight} showed that it is impossible to achieve an approximation factor of $39/40$ even for the case of 3 agents. For chores division, the current best-known approximation is due to \citet{huang2023reduction}, who showed an approximation factor of $13/11 + \epsilon$ for the general case and an improved factor of $15/13$ for the case of 3 agents. On the negative side, \citet{feige2021tight} showed that it is impossible to achieve an approximation factor of $44/43$ even for the case of 3 agents. In contrast to both goods and chores division, \citet{kulkarni2021indivisible} showed that in the mixed manna setting, it is NP-hard to find an $\alpha$-MMS allocation for any constant factor $\alpha \in (0, 1]$ even when a solution exists when $\alpha = 1$.

\subsection{Organization}

In the following, Section 2 lays out the preliminary foundation. In Section 3, we prove the different parts of Theorem \ref{thm:main}. In Section 4, we conclude the paper with a brief discussion on future directions.

\section{Preliminaries}\label{sec:pre}

In this section, we formally define the model of fair division and MMS allocations. Then, we define reduction rules and introduce the reduction rules that we use.

\subsection{The Model}

An {\em instance} $I$ of the fair division problem is a tuple $(\mathcal{N}, \mathcal{M}, \mathbf{u})$ where $\mathcal{N} = [n]$ is a set of $n$ agents, $\mathcal{M} = \{o_1, o_2, \dots, o_m\}$ is a set of $m$ items, and $\mathbf{u}$ is a collection of $n$ utility functions $u_i:\mathcal{P}(\mathcal{M}) \rightarrow \mathbb{R}$ corresponding to the agents. An agent's utility function represents its preferences by assigning a numerical value to each subset of $\mathcal{M}$. For the case of a single item $o_j$, we will write $u_i(o_j)$ instead of $u_i(\{o_j\})$ for brevity. An {\em allocation} $\pi$ is an ordered $n$-tuple $(\pi_1, \pi_2, \dots, \pi_n)$ where each $\pi_i$ is a subset of $\mathcal{M}$ and no pair of distinct subsets $\pi_i, \pi_j$ intersect, and the union of all subsets $\pi_i$ is equal to $\mathcal{M}$. We refer to the subsets $\pi_i$ as {\em bundles}. A utility function is {\em additive} if for any pair of disjoint subsets $S, T \subseteq \mathcal{M}$, we have $u_i(S \cup T) = u_i(S) + u_i(T)$. We assume all utility functions are additive.

\subsection{Maximin Share Allocations}
We now formally introduce the notion of maximin share allocations. Fix an instance $I = (\mathcal{N}, \mathcal{M}, \mathbf{u})$ of the fair division problem. For a positive integer $n$, we use $\Pi_n(\mathcal{M})$ to denote the set of allocations $\pi = (\pi_1, \pi_2, \dots, \pi_n)$ of $\mathcal{M}$ among $n$ agents. We define the {\em maximin share guarantee} $u_i^\text{MMS}(I)$ of each agent $i$ as follows
$$
u_i^\text{MMS}(I) \coloneqq \max_{\pi \in \Pi_n(\mathcal{M})} \min_{k \in [n]} u_i(\pi_k)
$$
This is the utility that agent $i$ would receive if it decides the allocation but must receive the least valuable bundle. We say that a bundle $\pi_i$ {\em satisfies} an agent $i$ if $u_i(\pi_i) \geq u_i^\text{MMS}(I)$. An allocation $\pi = (\pi_1, \pi_2, \dots, \pi_n)$ is said to be {\em maximin share} (MMS) if for each $i$, the bundle $\pi_i$ satisfies the agent $i$. If every bundle of an allocation $\pi$ satisfies agent $i$, then we say $\pi$ is {\em MMS for agent} $i$. Clearly, for any agent $i$, there exists some allocation that is MMS for $i$ (consider any allocation that maximizes the utility of the least valuable bundle in the perspective of agent $i$), but such an allocation is not necessarily unique.

We say that an instance has {\em same-order preference} (SOP) if $u_i(o_1) \geq u_i(o_2) \geq \dots \geq u_i(o_m)$ for all agents $i$. Any instance $I$ that does not have SOP can be transformed into one that has SOP by choosing, for each utility function $u_i$, a permutation $\sigma:[m] \rightarrow [m]$ such that $u_i(o_{\sigma(1)}) \geq u_i(o_{\sigma(2)}) \geq \dots \geq u_i(o_{\sigma(m)})$ and replacing $u_i$ with a new utility function $u_i'$ defined by $u_i'(o_j) \coloneqq u_i(o_{\sigma(j)})$ for each item $o_j$. We call the resulting instance the {\em SOP instance corresponding to} $I$ and denote it by $I^\text{SOP}$. Observe the MMS guarantee of each agent is the same in $I$ as in $I^\text{SOP}$. This is because the MMS guarantee of an agent depends only on the number of agents and the item utilities, and does not depend on the order of the item utilities.

SOP instances are important because in a sense, they are the hardest instances for the purpose of finding an MMS allocation. This allows us to focus our attention on them.

\begin{proposition}\label{prop:sop}\citep{BL16}
	Let $I$ be an instance of the fair division problem and $I^\text{SOP}$ be the SOP instance corresponding to $I$. If $I^\text{SOP}$ admits an MMS allocation, then $I$ does as well. \qed
\end{proposition}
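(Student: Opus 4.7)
The plan is to start from an MMS allocation of $I^\text{SOP}$ and produce an MMS allocation of $I$ through a matching-based reassignment of the same bundles to agents. The first observation is that for every agent $i$, the MMS guarantee is the same in $I$ and $I^\text{SOP}$, because it depends only on the multiset of per-item utility values that agent $i$ assigns (which the SOP transformation preserves for each agent separately) and not on which items carry which of these values.

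Let $\pi^\text{SOP} = (B_1, \ldots, B_n)$ be an MMS allocation of $I^\text{SOP}$. I would build a bipartite graph $G$ with vertex classes $\mathcal{N}$ and $\{B_1, \ldots, B_n\}$, inserting an edge $(i, B_k)$ whenever $u_i(B_k) \geq u_i^\text{MMS}(I)$ under the \emph{original} utility functions of $I$. A perfect matching in $G$ is immediately an MMS allocation of $I$, since every agent receives a bundle of $\pi^\text{SOP}$ whose value in $I$ meets its MMS guarantee. The proof therefore reduces to exhibiting a perfect matching in $G$, for which I would invoke Hall's marriage theorem: it suffices to show $|N_G(S)| \geq |S|$ for every $S \subseteq \mathcal{N}$.

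The main step, and what I expect to be the main obstacle, is verifying Hall's condition. The quantitative levers available are the identity $\sum_{k=1}^n u_i(B_k) = u_i(\mathcal{M})$ (the total is invariant under reordering of bundles) together with the standard lower bound $u_i(\mathcal{M}) \geq n \cdot u_i^\text{MMS}(I)$, which follows because agent $i$'s own MMS partition already has every bundle worth at least $u_i^\text{MMS}(I)$. Summing the strict shortfalls $u_i(B_k) < u_i^\text{MMS}(I)$ over all $i \in S$ and all $k \notin N_G(S)$ and comparing with these bounds is the natural route to a contradiction. The delicate point is that the averaging only forces bundles in $N_G(S)$ to be satisfying for \emph{some} agent in $S$, not for all, and the argument must be pushed further to rule out the Hall-violating configuration; this is especially technical in the mixed manna setting, where $u_i^\text{MMS}(I)$ may be positive or negative and the direction of the inequalities can change.

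If the direct Hall argument does not go through cleanly, my fallback is to choose, among all MMS allocations of $I^\text{SOP}$, one minimizing a suitable defect measured in $I$ (for instance, the minimum over assignments $\tau$ of $\sum_i \max\{0,\, u_i^\text{MMS}(I) - u_i(B_{\tau(i)})\}$), and argue via local item swaps between bundles, guided by the SOP order on items, that any positive defect can be strictly reduced---contradicting minimality and thus forcing the defect to be zero.
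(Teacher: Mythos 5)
There is a genuine gap: the central claim of your approach --- that the bundles of an MMS allocation of $I^\text{SOP}$ can be reassigned \emph{as fixed sets of actual items} to the agents so that each agent meets its MMS guarantee under the original utilities --- is false, so the Hall condition you flag as the ``main obstacle'' genuinely fails rather than being merely delicate. Concretely, take $n=3$, $m=9$, and let every agent's sorted utility vector be $(2,2,2,1,1,1,0,0,0)$, so every agent has MMS guarantee $3$ and $I^\text{SOP}$ gives every slot-bundle $B_1=\{o_1,o_4,o_7\}$, $B_2=\{o_2,o_5,o_8\}$, $B_3=\{o_3,o_6,o_9\}$ the value $2+1+0=3$; hence $(B_1,B_2,B_3)$ is an MMS allocation of $I^\text{SOP}$. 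Now let $u_1$ be already sorted (identity permutation), and let $u_2=u_3$ assign the values $(2,2,1,2,0,1,1,0,0)$ to $(o_1,\dots,o_9)$. Then $u_2(B_1)=5$ but $u_2(B_2)=u_2(B_3)=2<3$, and likewise for agent $3$, so agents $2$ and $3$ are both satisfied only by $B_1$: the set $S=\{2,3\}$ violates Hall's condition and no perfect matching of these bundles exists, even though $I$ itself does admit an MMS allocation. Your fallback (minimizing a defect over MMS allocations of $I^\text{SOP}$ and performing local swaps) points in roughly the right direction but is not carried out, so it does not close the gap.

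The standard proof (from the cited source) does not keep the bundles fixed; it rebuilds them via a picking sequence. Interpret the SOP allocation as a pattern of \emph{slots}: process positions $j=1,\dots,m$ in decreasing SOP order, and when slot $o_j$ belongs to agent $k$'s SOP bundle, let agent $k$ pick its most preferred remaining actual item of $I$. When agent $k$ makes the pick corresponding to slot $o_j$, only $j-1$ items have been removed, so at least $m-j+1$ items remain and the best of them is worth at least agent $k$'s $j$-th largest item value, which is exactly the SOP value of slot $o_j$. Summing over agent $k$'s slots and using additivity, the picked bundle is worth at least $u_k$ of its SOP bundle, hence at least $u_k^\text{MMS}(I^\text{SOP})=u_k^\text{MMS}(I)$. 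This argument is sign-agnostic, so it covers the mixed manna setting; your correct observation that the MMS guarantees of $I$ and $I^\text{SOP}$ coincide is the one ingredient of your write-up that survives into this proof.
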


In the rest of this paper, we assume all instances have SOP in light of Proposition \ref{prop:sop}.

Our approach to finding MMS allocations is based on induction and requires the following technical result. In the case when there are exactly 2 agents, \citet{BL16} showed that MMS allocations can be found by a divide-and-choose algorithm.

\begin{proposition}\label{prop:2agents}
	\citep{BL16} Any 2-agent instance admits an MMS allocation. \qed
\end{proposition}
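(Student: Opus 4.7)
The plan is to invoke the classical divide-and-choose protocol, which works uniformly for goods, chores, and mixed manna because the argument only uses additivity of the utilities and the fact that with $n=2$ the two bundles in any partition sum to the total value of $\mathcal{M}$.

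First, I would let agent 1 produce a partition $(B_1, B_2)$ of $\mathcal{M}$ witnessing its own MMS value, i.e.\ a partition such that $\min(u_1(B_1), u_1(B_2)) = u_1^{\text{MMS}}(I)$; such a partition exists by definition of the MMS guarantee. Then I would let agent 2 select whichever of $B_1, B_2$ is more valuable in its own view, and give the remaining bundle to agent 1.

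Agent 1's bundle, whichever it turns out to be, has $u_1$-value at least $\min(u_1(B_1), u_1(B_2)) = u_1^{\text{MMS}}(I)$, so agent 1 is satisfied. The key step is showing agent 2 is satisfied: by additivity, $u_2(B_1) + u_2(B_2) = u_2(\mathcal{M})$, so agent 2's chosen bundle has value $\max(u_2(B_1), u_2(B_2)) \geq u_2(\mathcal{M})/2$. It remains to check that $u_2^{\text{MMS}}(I) \leq u_2(\mathcal{M})/2$, which again follows from additivity: for any partition $(C_1, C_2)$ achieving the maximin for agent 2, $\min(u_2(C_1), u_2(C_2)) \leq (u_2(C_1) + u_2(C_2))/2 = u_2(\mathcal{M})/2$.

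I do not anticipate any real obstacles here. The only point worth flagging is that the argument must be phrased so it makes no appeal to the sign of the utilities; writing it via the additive identity $u_2(B_1) + u_2(B_2) = u_2(\mathcal{M})$ and the elementary inequality $\min \leq \text{average} \leq \max$ handles goods, chores, and mixed items simultaneously, which is exactly what is needed in the sequel.
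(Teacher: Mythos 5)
Your proof is correct and is exactly the divide-and-choose argument the paper attributes to Bouveret and Lemaître for this proposition (the paper cites the result rather than reproving it). Your observation that the argument uses only additivity and the identity $u_2(B_1)+u_2(B_2)=u_2(\mathcal{M})$, and hence applies verbatim to chores and mixed manna, is precisely why the paper can safely invoke this proposition as the $n=2$ base case in the mixed setting.
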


A particularly useful observation that can be made is that introducing dummy items with zero utility to a given instance does not affect whether or not an MMS allocation exists. This is because the MMS guarantee of each agent is unaffected by the addition of such dummy items. This allows us to assume $m = n+5$ without loss of generality, by adding an appropriate number of dummy items if $m < n+5$. We will often make use of this assumption when working with instances for which $m \leq n+5$.

\subsection{Types of Agents}

We distinguish between agents by their utility functions and MMS guarantees. An agent $i$ is said to be a {\em goods agent} if $u_i(o_j) \geq 0$ for each item $o_j \in \mathcal{M}$, and a {\em chores agent} if $u_i(o_j) \leq 0$ for each item $o_j \in \mathcal{M}$. It is possible for an agent to be neither a goods agent nor a chores agent. Such agents are called {\em mixed agents}. We also classify agents depending on the values of their MMS guarantees. An agent is said to be {\em positive}, {\em non-positive}, {\em negative}, or {\em non-negative} if its MMS guarantee is positive, non-positive, negative, or non-negative, respectively.


\subsection{Reduction Rules}

Let $I = (\mathcal{N}, \mathcal{M}, \textbf{u})$ be an instance of the fair division problem. It is useful to reduce $I$ to a smaller sub-instance $I_R$ by allocating a subset of items to a subset of agents. Formally, a {\em reduction rule} $R$ is a function $R$ that maps $I$ to a pair $(I_R, \pi)$ where

\begin{enumerate}
	\item $\pi$ is an allocation of a nonempty subset of items $\mathcal{M}' \subseteq \mathcal{M}$ to a nonempty subset of agents $\mathcal{N}' \subseteq \mathcal{N}$; and
	\item $I_R = (\mathcal{N}_R, \mathcal{M}_R, \textbf{u}_R)$ is the instance of the fair division problem obtained from $I$ by deleting the items in $\mathcal{M}'$ and the agents in $\mathcal{N}'$.
\end{enumerate}

We say a reduction rule $R$ {\em preserves the MMS guarantee} of an agent $i \in \mathcal{N}_R$ if $u_i^\text{MMS}(I_R) \geq u_i^\text{MMS}(I)$. We say that $R$ {\em satisfies} an agent $i \in \mathcal{N} \setminus \mathcal{N}_R$ if the bundle that $\pi$ allocates to agent $i$ satisfies agent $i$. If $R$ preserves the MMS guarantee of every agent $i \in \mathcal{N}_R$ and satisfies every agent $i \in \mathcal{N} \setminus \mathcal{N}_R$, we call $R$ a {\em valid reduction rule}. Valid reduction rules are useful for finding MMS allocations. To preserve consistency between the agent and item indices in $I$ and $I_R$, we assume that $R$ always deletes agents and items with the greatest indices, starting from agent $n$ and item $o_m$. This assumption can be made without loss of generality by permuting these indices.

\begin{proposition}\label{prop:validreduction}
	Let $I$ be an instance and $R$ be a valid reduction rule. If the sub-instance $I_R$ obtained from $I$ by applying $R$ admits an MMS allocation, then $I$ admits an MMS allocation.
\end{proposition}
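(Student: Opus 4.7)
The plan is to construct an MMS allocation of $I$ by concatenating the partial allocation $\pi$ produced by $R$ with any MMS allocation $\pi'$ of the sub-instance $I_R$. First, I would observe that by the definition of a reduction rule, the sets $\mathcal{N}' = \mathcal{N} \setminus \mathcal{N}_R$ and $\mathcal{N}_R$ partition the agents of $I$, while the sets $\mathcal{M}' = \mathcal{M} \setminus \mathcal{M}_R$ and $\mathcal{M}_R$ partition the items. Since $\pi$ allocates $\mathcal{M}'$ among $\mathcal{N}'$ and $\pi'$ allocates $\mathcal{M}_R$ among $\mathcal{N}_R$, defining $\pi^*_i \coloneqq \pi_i$ for $i \in \mathcal{N}'$ and $\pi^*_i \coloneqq \pi'_i$ for $i \in \mathcal{N}_R$ yields a well-defined allocation of $\mathcal{M}$ among the $n$ agents of $I$.

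Next, I would verify the MMS property bundle by bundle. For a deleted agent $i \in \mathcal{N}'$, validity of $R$ says that the bundle $\pi_i$ satisfies $i$, which by definition of ``satisfies'' is exactly the inequality $u_i(\pi^*_i) \geq u_i^\text{MMS}(I)$. For a remaining agent $i \in \mathcal{N}_R$, I would chain two inequalities: since $\pi'$ is MMS for $I_R$ we have $u_i(\pi'_i) \geq u_i^\text{MMS}(I_R)$, and since $R$ preserves the MMS guarantee of $i$ we have $u_i^\text{MMS}(I_R) \geq u_i^\text{MMS}(I)$. Together these give $u_i(\pi^*_i) \geq u_i^\text{MMS}(I)$, as desired.

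The statement is essentially a definitional unpacking, so there is no substantive obstacle: the two clauses defining a valid reduction rule (satisfying each deleted agent and preserving the MMS guarantee of each remaining agent) were engineered precisely so that this gluing argument goes through. The only bookkeeping is the verification that $\pi^*$ is a genuine allocation, i.e.\ that its bundles are pairwise disjoint and cover $\mathcal{M}$, and this is immediate from the clean partition of both agents and items imposed by $R$.
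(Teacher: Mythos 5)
Your proposal is correct and is exactly the paper's argument (the paper merely states "applying both $\pi$ and $\pi'$ to $I$ is an MMS allocation" without spelling out the details); your version just makes explicit the gluing of the two partial allocations and the two inequalities for deleted versus remaining agents.
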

\begin{proof}
	Let $\pi$ be the allocation that is implicitly defined by $R$ and $\pi'$ be an MMS allocation for the instance $I_R$. Clearly, applying both $\pi$ and $\pi'$ to the instance $I$ is an MMS allocation for $I$.
\end{proof}

First, we present two simple reduction rules.

\begin{proposition}\label{prop:1reduction_preserve}
	Let $I$ be an instance and $\pi = (\pi_1, \pi_2, \dots, \pi_n)$ be an allocation that is MMS for an agent $i$. The following two reduction rules preserve the MMS guarantee of agent $i$.
	\begin{enumerate}
		\item Allocating an item $o_a$ to an agent $j \neq i$ if $\pi$ contains a bundle $\pi_\ell$ with $u_i(o_a) \leq u_i(\pi_\ell)$.
		\item Allocating two items $o_a, o_b$ to an agent $j \neq i$ if $u_i(\{o_a, o_b\}) \leq u_i^\text{MMS}(I)$.
	\end{enumerate}
\end{proposition}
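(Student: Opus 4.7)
The plan is to exhibit, for each of the two rules, an explicit allocation of the sub-instance's items among its $n-1$ remaining agents in which every bundle has value at least $u_i^{\text{MMS}}(I)$ from agent $i$'s point of view; this immediately gives $u_i^{\text{MMS}}(I_R) \geq u_i^{\text{MMS}}(I)$ by definition of the MMS guarantee. Starting from the allocation $\pi$, which by hypothesis is MMS for $i$, I would locate the bundle(s) of $\pi$ holding the item(s) that the rule removes, strip those items out, and merge the resulting remnant with another carefully chosen bundle. This reduces the number of bundles from $n$ to $n-1$ while the leftover items exhaust $\mathcal{M} \setminus \mathcal{M}'$, and by additivity the new bundle's value is just the sum of the old values minus the value of the removed items.

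For Rule 1, let $p$ be the index with $o_a \in \pi_p$. Set $q = \ell$ if $p \neq \ell$, and otherwise let $q$ be any index different from $\ell$ (which exists because the rule requires a second agent). The new allocation consists of the untouched bundles $\{\pi_k : k \notin \{p,q\}\}$ together with the merged bundle $(\pi_p \setminus \{o_a\}) \cup \pi_q$, whose value to $i$ is $u_i(\pi_p) + u_i(\pi_q) - u_i(o_a)$. When $p = \ell$, the hypothesis $u_i(o_a) \leq u_i(\pi_\ell)$ makes $u_i(\pi_p) - u_i(o_a) \geq 0$, so this value is at least $u_i(\pi_q) \geq u_i^{\text{MMS}}(I)$; when $p \neq \ell$, the same hypothesis makes $u_i(\pi_q) - u_i(o_a) = u_i(\pi_\ell) - u_i(o_a) \geq 0$, so it is at least $u_i(\pi_p) \geq u_i^{\text{MMS}}(I)$.

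For Rule 2, let $p, q$ be the indices with $o_a \in \pi_p$ and $o_b \in \pi_q$. If $p = q$, choose any $r \neq p$ and form the merged bundle $(\pi_p \setminus \{o_a,o_b\}) \cup \pi_r$; if $p \neq q$, form the merged bundle $(\pi_p \setminus \{o_a\}) \cup (\pi_q \setminus \{o_b\})$. In the first case the value is $u_i(\pi_p) + u_i(\pi_r) - u_i(\{o_a,o_b\})$, and since $u_i(\pi_p) \geq u_i^{\text{MMS}}(I) \geq u_i(\{o_a,o_b\})$ by hypothesis, this is at least $u_i(\pi_r) \geq u_i^{\text{MMS}}(I)$. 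In the second case the value is $u_i(\pi_p) + u_i(\pi_q) - u_i(\{o_a,o_b\})$, which by the same inequality is at least $u_i(\pi_p) \geq u_i^{\text{MMS}}(I)$.

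The only real subtlety is the small case split on whether the removed item in Rule 1 happens to lie inside the distinguished bundle $\pi_\ell$, and whether the two removed items in Rule 2 lie inside the same bundle; both splits determine which merge partner to pick so that the single inequality chain goes through. Beyond that, everything is routine additivity, so I do not expect any serious obstacle.
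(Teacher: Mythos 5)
Your proposal is correct and follows essentially the same route as the paper's proof: exhibit an $(n-1)$-bundle allocation by merging the bundle containing the removed item(s) with $\pi_\ell$ (or with an arbitrary second bundle when they coincide), and bound the merged bundle's value by additivity using the stated hypothesis. The paper packages your two cases for Rule~1 into a single $\max/\min$ inequality, but the argument is the same.
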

\begin{proof}
	(1): Without loss of generality, assume $o_a \in \pi_n$. If $\pi_\ell = \pi_n$, then we have $u_i(o_a) \leq u_i(\pi_n)$. Otherwise, we can relabel the bundles of $\pi$ so that $\pi_\ell = \pi_{n-1}$ and $u_i(o_a) \leq u_i(\pi_\ell) = u_i(\pi_{n-1})$. In either case, we have $u_i(o_a) \leq \max(u_i(\pi_{n-1}), u_i(\pi_n))$. We claim the allocation $\pi' = (\pi_1, \pi_2, \dots, \pi_{n-2}, \pi_{n-1} \cup \pi_n \setminus \{o_a\})$ certifies that $u_i^\text{MMS}(I_R) \geq u_i^\text{MMS}(I)$. Since $u_i$ is additive, we have
	\begin{align*}
		u_i(\pi_{n-1} \cup \pi_n \setminus \{o_a\}) &= u_i(\pi_{n-1}) + u_i(\pi_n) - u_i(o_a) \\
		&\geq u_i(\pi_{n-1}) + u_i(\pi_n) \\
		&- \max(u_i(\pi_{n-1}), u_i(\pi_n)) \\
		&= \min(u_i(\pi_{n-1}), u_i(\pi_n)) \\
		&\geq u_i^\text{MMS}(I)
	\end{align*}
	It follows that the utility of each bundle of $\pi'$ is at least $u_i^\text{MMS}(I)$, so $u_i^\text{MMS}(I_R) \geq u_i^\text{MMS}(I)$.
	
	(2): Without loss of generality, assume $o_a, o_b \in \pi_{n-1} \cup \pi_n$. We claim that $\pi' = (\pi_1, \pi_2, \dots, \pi_{n-2}, \pi_{n-1} \cup \pi_n \setminus \{o_a, o_b\})$ certifies that $u_i^\text{MMS}(I_R) \geq u_i^\text{MMS}(I)$. Since $u_i(\pi_n) \geq u_i^\text{MMS}(I) \geq u_i(\{o_a, o_b\})$, we have
	\begin{align*}
		u_i(\pi_{n-1} \cup \pi_n \setminus \{o_a, o_b\}) &= u_i(\pi_{n-1}) + u_i(\pi_n) - u_i(\{o_a, o_b\}) \\
		&\geq u_i(\pi_{n-1}) \\
		&\geq u_i^\text{MMS}(I)
	\end{align*}
	It follows that the utility of each bundle of $\pi'$ is at least $u_i^\text{MMS}(I)$, so $u_i^\text{MMS}(I_R) \geq u_i^\text{MMS}(I)$.
\end{proof}

Using Proposition \ref{prop:1reduction_preserve}, we can derive two useful valid reduction rules when the allocations that are MMS for each agent have certain properties. The first valid reduction rule is as follows.

\begin{lemma}\label{lemma:1reduction}
	Let $I$ be an instance and suppose that for each agent $i$, there exists an allocation that is MMS for agent $i$ and contains a singleton bundle. Then, there exists a valid reduction rule that allocates a single item to a single agent.
\end{lemma}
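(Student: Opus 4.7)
The plan is to define the reduction rule by peeling off the right singleton bundle from among the MMS allocations guaranteed by hypothesis. For each agent $i$, fix an allocation $\pi^{(i)}$ that is MMS for $i$ and contains a singleton bundle, and let $o_{s_i}$ denote the unique item in that singleton. Choose an agent $j$ that maximizes $s_j$ over all agents (breaking ties arbitrarily), and let $R$ be the rule that allocates the single item $o_{s_j}$ to agent $j$.

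To show that $R$ is valid, I would verify the two required properties separately. For the satisfaction of the eliminated agent $j$, the bundle $\{o_{s_j}\}$ is one of the bundles of $\pi^{(j)}$ and $\pi^{(j)}$ is MMS for $j$, so $u_j(o_{s_j}) \geq u_j^{\text{MMS}}(I)$ is immediate. For preservation of the MMS guarantee of each remaining agent $i \neq j$, I would invoke Proposition \ref{prop:1reduction_preserve}(1) using $\pi^{(i)}$; this requires exhibiting a bundle $\pi_\ell^{(i)}$ of $\pi^{(i)}$ with $u_i(o_{s_j}) \leq u_i(\pi_\ell^{(i)})$. Since $s_j \geq s_i$ by the choice of $j$, SOP gives $u_i(o_{s_j}) \leq u_i(o_{s_i})$, and the singleton $\{o_{s_i}\}$ of $\pi^{(i)}$ serves as the desired bundle $\pi_\ell^{(i)}$.

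The main challenge is identifying the correct selection rule rather than carrying out any intricate calculation: a naive choice of agent and item could fail the preservation condition, because the item handed away must be dominated by some bundle in every other agent's MMS allocation simultaneously. Picking the agent whose singleton sits latest in the SOP order arranges this automatically, since under SOP a larger item index corresponds to weakly smaller utility in every agent's view, so $o_{s_j}$ is no better than $o_{s_i}$ for each $i \neq j$. With that observation in place, both halves of the validity check collapse to one-line applications of SOP and Proposition \ref{prop:1reduction_preserve}(1), and relabeling of indices (allowed without loss of generality) places $j$ and $o_{s_j}$ at the largest indices to match the stated convention for reduction rules.
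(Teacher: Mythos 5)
Your proposal is correct and matches the paper's own proof essentially verbatim: both select the agent whose MMS allocation contains a singleton of maximum index, satisfy that agent directly, and use SOP together with Proposition \ref{prop:1reduction_preserve}(1) applied to each other agent's own singleton bundle to establish preservation. Your write-up merely makes the implicit SOP step explicit.
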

\begin{proof}
	For each agent $i$, let $\pi^i$ denote an allocation that is MMS for agent $i$ and contains a singleton bundle. Let $o_k$ denote the item with maximum index $k$ such that $\{o_k\}$ is a bundle in some allocation $\pi^j$. We show that the reduction rule $R$ that allocates $o_k$ to agent $j$ is a valid reduction rule. Since $\{o_k\}$ is a bundle in $\pi^j$, we have $u_j(o_k) \geq u_j^\text{MMS}(I)$, so $R$ satisfies agent $j$. On the other hand, for any agent $i \neq j$, there exists a singleton bundle $\pi_\ell$ in $\pi^i$ for which $u_i(o_k) \leq u_i(\pi_\ell)$ by the choice of $o_k$. By Proposition \ref{prop:1reduction_preserve}(1), $R$ preserves the MMS guarantee of agent $i$. Thus, $R$ is a valid reduction rule.
\end{proof}

The second valid reduction rule that we will show requires the following technical result, which can be understood as an iterated version of Proposition \ref{prop:1reduction_preserve}.

\begin{proposition}\label{prop:atmost2reduction}
	Let $I$ be an instance, $X$ be a set of $k$ agents, and $Y$ be a set of $k$ disjoint bundles each of size 1 or 2. Suppose that for some agent $j \notin X$, we have $u_j(\pi_s) \leq u_j^\text{MMS}(I)$ for each bundle $\pi_s$ in $Y$. Let $R$ be any reduction rule that allocates the bundles in $Y$ to agents in $X$ so that each agent in $X$ receives exactly one bundle in $Y$. Then, $R$ preserves the MMS guarantee of agent $j$.
\end{proposition}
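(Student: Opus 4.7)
The plan is to induct on $k$, peeling off one bundle at a time by invoking Proposition \ref{prop:1reduction_preserve}. The hypothesis $u_j(\pi_s) \leq u_j^\text{MMS}(I)$ on each bundle in $Y$ is exactly what the two parts of that proposition need, with singletons handled by part (1) and pairs by part (2).

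For the base case $k = 1$, let $\pi_s$ be the sole bundle in $Y$ and $a$ the sole agent in $X$. If $\pi_s = \{o_a\}$ is a singleton, fix any allocation $\pi^*$ that is MMS for agent $j$; every bundle $\pi^*_\ell$ of $\pi^*$ satisfies $u_j(\pi^*_\ell) \geq u_j^\text{MMS}(I) \geq u_j(o_a)$, so Proposition \ref{prop:1reduction_preserve}(1) applies and gives $u_j^\text{MMS}(I_R) \geq u_j^\text{MMS}(I)$. If $\pi_s$ has size 2, the hypothesis $u_j(\pi_s) \leq u_j^\text{MMS}(I)$ is precisely the hypothesis of Proposition \ref{prop:1reduction_preserve}(2), which again preserves $j$'s MMS guarantee.

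For the inductive step, assume the claim holds whenever $|X| = |Y| = k-1$, and consider the case $|X| = |Y| = k$. Choose any bundle $\pi_s \in Y$ and let $a \in X$ be the agent that $R$ assigns to $\pi_s$. Applying the base case to the single pair $(\pi_s, a)$ produces a sub-instance $I'$ (obtained from $I$ by deleting $a$ and the items in $\pi_s$) with $u_j^\text{MMS}(I') \geq u_j^\text{MMS}(I)$. For each remaining bundle $\pi_t \in Y \setminus \{\pi_s\}$ we have $u_j(\pi_t) \leq u_j^\text{MMS}(I) \leq u_j^\text{MMS}(I')$, so the inductive hypothesis applied to $I'$ with the $k-1$ remaining agents in $X \setminus \{a\}$ and the $k-1$ remaining bundles in $Y \setminus \{\pi_s\}$ yields $u_j^\text{MMS}(I_R) \geq u_j^\text{MMS}(I') \geq u_j^\text{MMS}(I)$, as required.

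There is no serious obstacle here; the argument is essentially an orderly iteration of Proposition \ref{prop:1reduction_preserve}. The only point to track is that the hypothesis $u_j(\pi_t) \leq u_j^\text{MMS}(I)$, stated against the \emph{original} instance, must remain usable at each inductive step in the smaller instance $I'$. This is immediate from the monotonicity $u_j^\text{MMS}(I') \geq u_j^\text{MMS}(I)$ that Proposition \ref{prop:1reduction_preserve} itself provides, which is what makes the iteration go through cleanly. Note also that the order in which bundles are peeled off is irrelevant, so we are free to match $R$'s specific assignment of bundles to agents at each step.
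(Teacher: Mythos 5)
Your proof is correct and follows essentially the same route as the paper's: induction on $k$, peeling off one bundle via Proposition \ref{prop:1reduction_preserve} (part (1) for singletons, part (2) for pairs) and applying the inductive hypothesis to the remaining $k-1$ bundles. Your explicit remark that the hypothesis $u_j(\pi_t) \leq u_j^\text{MMS}(I) \leq u_j^\text{MMS}(I')$ carries over to the sub-instance is exactly the point that makes the iteration legitimate, and the paper relies on the same monotonicity, if slightly less explicitly.
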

\begin{proof}
	We prove this by induction on $k$ by repeatedly applying Proposition \ref{prop:1reduction_preserve}. If $k=0$, then there is nothing to show. Suppose $k=1$ and let $\pi_s$ be the unique bundle in $Y$. Let $\pi$ be an allocation that is MMS for agent $j$ and $\pi_\ell$ be any bundle in $\pi$. By the choice of $\pi_\ell$, we have $u_j(\pi_s) \leq u_j^\text{MMS}(I) \leq u_j(\pi_\ell)$. If $\pi_s$ has size 1, then allocating $\pi_s$ to the agent in $X$ preserves the MMS guarantee of agent $j$ by Proposition \ref{prop:1reduction_preserve}(1). Otherwise if $\pi_s$ has size 2, then allocating $\pi_s$ to the agent in $X$ preserves the MMS guarantee of agent $j$ by Proposition \ref{prop:1reduction_preserve}(2). Assume the proposition holds for all $k < k'$ for some $k'$. We proceed to show that it holds for $k = k'$ as well.
	
	Suppose $R$ allocates some bundle $\pi_s$ in $Y$ to some agent $i$ in $X$. We can view $R$ as being the composition of two reduction rules $R_1$ and $R_2$, where $R_1$ allocates $\pi_s$ to $i$ and $R_2$ allocates $Y \setminus \pi_s$ to $X \setminus \{i\}$ in the same way as $R$. Let $I_1$ be the instance obtained from $I$ by applying $R_1$, and $I_2$ be the instance obtained from $I_1$ by applying $R_2$. According to the induction hypothesis, $R_1$ preserves the MMS guarantee of agent $j$, so $u_j^\text{MMS}(I_1) \geq u_j^\text{MMS}(I)$. Similarly, the induction hypothesis implies that $R_2$ preserves the MMS guarantee of agent $j$ (when applied to the instance $I_1$), so $u_j^\text{MMS}(I_2) \geq u_j^\text{MMS}(I_1)$. Together, we have $u_j^\text{MMS}(I_2) \geq u_j^\text{MMS}(I)$, so applying $R_1$ followed by $R_2$ on the instance $I$ preserves the MMS guarantee of agent $j$. The proposition follows from the fact that applying $R_1$ followed by $R_2$ on the instance $I$ is equivalent to applying $R$ on the instance $I$.
\end{proof}

Using the above proposition, we can derive the following valid reduction rule.

\begin{lemma}\label{lemma:atmost2reduction}
	Let $I$ be an instance and suppose that for some agent $i$, there exists an allocation $\pi$ that is MMS for $i$ and contains $n-1$ bundles each of size 1 or 2. Then, $I$ admits an MMS allocation or there exists a valid reduction rule that allocates to $k$ agents at least $k$ items for some $k>0$.
\end{lemma}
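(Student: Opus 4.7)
The plan is to set up a Hall-style matching between the $n-1$ non-$i$ agents and the $n-1$ small bundles of $\pi$, and to use a failure of Hall's condition to extract a valid reduction rule. Write the given allocation as $\pi = (\pi_1, \dots, \pi_n)$, with $\pi_1, \dots, \pi_{n-1}$ being the bundles of size at most $2$; by hypothesis, $u_i(\pi_s) \geq u_i^\text{MMS}(I)$ for every $s$. I define the bipartite graph $G$ with vertex parts $\mathcal{N} \setminus \{i\}$ and $\{\pi_1, \dots, \pi_{n-1}\}$, placing an edge between agent $j$ and bundle $\pi_s$ whenever $u_j(\pi_s) \geq u_j^\text{MMS}(I)$.

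If $G$ has a perfect matching $M$, I assign each small bundle to its matched agent and give $\pi_n$ to agent $i$; every agent then receives a satisfying bundle, so $I$ admits an MMS allocation. Otherwise, Hall's condition fails on $G$, and I can fix $S \subseteq \mathcal{N} \setminus \{i\}$ that maximizes the deficiency $d := |S| - |N_G(S)| > 0$. Set $T := \{\pi_1, \dots, \pi_{n-1}\} \setminus N_G(S)$, so $|T| = n - 1 - |N_G(S)| = n - 1 - |S| + d \geq n - |S|$. The reduction rule I am aiming for will assign each agent in $\mathcal{N} \setminus S$ a distinct bundle from $T$.

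To build this assignment I consider the auxiliary bipartite graph $G'$ on parts $\mathcal{N} \setminus S$ and $T$, whose edges are the $G$-edges restricted to these sides together with every pair $(i, \pi_s)$ for $\pi_s \in T$ (these are valid edges because $u_i(\pi_s) \geq u_i^\text{MMS}(I)$). I then verify Hall's condition in $G'$. For $X' \subseteq \mathcal{N} \setminus S$ containing $i$, agent $i$'s universal connectivity gives $|N_{G'}(X')| = |T| \geq n - |S| \geq |X'|$. For $X' \subseteq \mathcal{N} \setminus (S \cup \{i\})$, any violation $|N_{G'}(X')| < |X'|$ would force $|N_G(S \cup X')| \leq |N_G(S)| + |N_{G'}(X')|$, because each $G$-neighbor of $X'$ lies either in $N_G(S)$ or in $T = N_{G'}(X')$; but then $S \cup X'$ would have deficiency $|S| + |X'| - (|N_G(S)| + |N_{G'}(X')|) > d$, contradicting the maximality of $d$.

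Let $M'$ be the matching in $G'$ saturating $\mathcal{N} \setminus S$ and define the reduction rule $R$ that allocates $M'(j)$ to each $j \in \mathcal{N} \setminus S$. Every $j \in \mathcal{N} \setminus S$ is satisfied by $M'(j)$: non-$i$ agents by the definition of $G'$-edges, and agent $i$ because $u_i(\pi_s) \geq u_i^\text{MMS}(I)$ holds for every small bundle. For each $j \in S$, every bundle in $M'(\mathcal{N} \setminus S) \subseteq T$ has size at most $2$ and satisfies $u_j(\pi_s) < u_j^\text{MMS}(I)$, so Proposition \ref{prop:atmost2reduction} preserves $j$'s MMS guarantee. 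Since $S \subseteq \mathcal{N} \setminus \{i\}$, we have $k := |\mathcal{N} \setminus S| \geq 1$, so $R$ allocates at least $k$ items to $k$ agents. I expect the main obstacle to be the Hall verification for $G'$ when $i \notin X'$; the trick is to merge $X'$ with $S$ and exploit the maximal choice of the deficiency $d$, while the easy case $i \in X'$ is handled by agent $i$'s being universally satisfied.
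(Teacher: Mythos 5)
Your proof is correct and follows essentially the same strategy as the paper's: build the bipartite agent--bundle satisfaction graph, read off an MMS allocation from a perfect matching, and otherwise convert a Hall violation into a valid reduction rule whose leftover agents are handled by Proposition \ref{prop:atmost2reduction}. The only difference is in execution rather than substance: the paper keeps all $n$ agents and all $n$ bundles and takes a \emph{minimal} deficient set on the bundle side (discarding the one possibly-large bundle from it), whereas you exclude $i$ and $\pi_n$ up front and take a \emph{maximum}-deficiency set on the agent side, re-inserting $i$ via the auxiliary graph $G'$ to make the counting close---both routes yield the same kind of reduction.
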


\begin{proof}
	We begin by defining a bipartite graph using the allocation $\pi$. Let $G(\pi) = (A \cup B, E)$ be the bipartite graph where $A$ is the set of agents in the instance $I$ and $B$ is the set of bundles in the allocation $\pi$. For each agent $j$ and each bundle $\pi_k$, the graph $G(\pi)$ contains the edge $\{j, \pi_k\}$ if and only if the bundle $\pi_k$ satisfies the agent $j$. If $G(\pi)$ contains a perfect matching, then allocating each bundle to the agent it is matched with yields an MMS allocation, so we are done.
	
	Assume $G(\pi)$ does not contain a perfect matching. By Hall's theorem, there exists a set $Y \subseteq B$ such that $|N(Y)| < |Y|$, where $N(Y) \coloneqq \{j \in A: \{j, \pi_k\} \in E \text{ for some } \pi_k \in Y\}$ is the neighborhood of $Y$. We choose such a set $Y$ with minimal size, that is, $|N(Y)| < |Y|$ and no smaller set $Y'$ has the property that $|N(Y')| < |Y'|$. Note that $|Y| \geq 2$ because every vertex of $Y$ is adjacent to at least one vertex of $X$, namely, the vertex representing agent $i$ for whom $\pi$ is MMS. Since $\pi$ is MMS for agent $i$, the vertex $i$ is adjacent to every vertex of $B$. In particular, $Y$ does not contain an isolated vertex, so we have $0 < |N(Y)| < |Y|$. Hence, $|Y| \geq 2$.
	
	Since $\pi$ contains $n-1$ bundles each of size 1 or 2, at most one bundle has size not equal to 1 or 2. In particular, at most one bundle of $Y$ has size not equal to 1 or 2, so it is possible to delete one bundle from $Y$ to obtain a subset $Y' \subset Y$ such that $|Y'| = |Y| - 1 \geq 1$ such that each bundle of $Y'$ has size 1 or 2. Moreover, by the minimality of $Y$ and Hall's theorem, the subgraph of $G(\pi)$ induced by $Y'$ and $N(Y')$ contains a perfect matching $M$. We claim the reduction rule $R$ that allocates bundles in $Y'$ to agents in $N(Y')$ according to the matching $M$ is a valid reduction rule that allocates to $k$ agents at least $k$ items for some $k > 0$.
	
	Since $|Y'| \geq 1$ and each bundle of $Y'$ has size 1 or 2, the number of items $R$ allocates is at least the number of agents $R$ allocates items to. Hence, $R$ allocates to $k$ agents at least $k$ items for some $k > 0$. Moreover, $R$ satisfies the agents in $N(Y')$ because $M$ matches each of these agents to a bundle that satisfies that agent. It remains to show that $R$ preserves the MMS guarantees of the agents not in $N(Y')$. Let $j$ be an agent not in $N(Y')$. By construction, $Y'$ is a set of disjoint bundles each of size 1 or 2. Moreover, we have $u_j(\pi_k) < u_j^\text{MMS}(I)$ for each bundle $\pi_k$ of $Y'$ by the definition of $G(\pi)$ because $i$ is not in $N(Y')$. Thus, Proposition \ref{prop:atmost2reduction} implies that $R$ preserves the MMS guarantee of agent $j$, so $R$ is valid.
\end{proof}

We remark that the idea involving bundles of sizes 1 and 2 and Hall's theorem has also been applied independently by Hummel in the proof of Lemma 23 of \citep{hummel2023lower}, in which a version of Lemma \ref{lemma:atmost2reduction} for goods is implicitly obtained.

\section{Existence of MMS Allocations}

In this section, we prove Theorem \ref{thm:main}. Each condition of Theorem \ref{thm:main} require a different technique, so we divide the proof into different subsections.

\subsection{$I$ contains $n \leq 3$ agents}

The proof of Theorem \ref{thm:main} relies on applying valid reduction rules to reduce a given instance to a smaller instance. This requires the base cases for the induction to first be shown separately. The $n=2$ case has been settled by Proposition \ref{prop:2agents}. In this subsection, we additionally show the $n=3$ case.

The outline of the proof is as follows. For each agent $i \in [3]$ in an instance $I$, we fix an allocation $\pi^i$ that is MMS for $i$. For each pair of allocations $\pi^i, \pi^j$, we define a special bipartite graph $G(\pi^i, \pi^j)$. By analyzing this graph, we show that whenever some $\pi^i$ contains a bundle of size at most one or two 2-bundles (i.e. bundle of size 2), then $I$ admits an MMS allocation. This leaves the remaining case in which each of the three allocations $\pi^i$ contains one 2-bundle and two 3-bundles. In this case, we will reduce the instance to a 3-agent 9-item instance for which each agent $i$, there is an allocation that is MMS for $i$ and contains three 3-bundles. Finally, we will use a proposition due to \citet{feige2021tight} that implies that such an instance admits an MMS allocation.

We now introduce the special bipartite graph. For each agent $i \in [3]$, we fix an allocation $\pi^i$ that is MMS for $i$. For each pair of agents $i \neq j$, we define the bipartite graph $G(\pi^i, \pi^j) = (A \cup B, E)$ as follows. The vertex sets $A, B$ consist of the bundles of $\pi^i, \pi^j$, respectively. The edge set $E$ contains the edge $\{\pi^i_x, \pi^j_y\}$ if and only if the bundles $\pi^i_x, \pi^j_y$ are disjoint.

\begin{proposition}\label{prop:edge}
	Let $I$ be a 3-agent instance. For each agent $i$, let $\pi^i = (\pi^i_1, \pi^i_2, \pi^i_3)$ be an allocation that is MMS for agent $i$. Suppose $G(\pi^i, \pi^j)$ contains an edge $\{\pi^i_x, \pi^j_y\}$ and there exists a bundle $\pi^i_z \neq \pi^i_x$ such that $u_i(\pi^i_z) \geq u_i(\pi^j_y)$. Then, $I$ admits an MMS allocation.
\end{proposition}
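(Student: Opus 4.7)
My plan is to construct an MMS allocation by analyzing the 3-partition of $\mathcal{M}$ induced by the two specified bundles. Let $k$ denote the third agent in $[3]\setminus\{i,j\}$, let $\pi^i_w$ denote the bundle of $\pi^i$ distinct from both $\pi^i_x$ and $\pi^i_z$, and set $R = \mathcal{M} \setminus (\pi^i_x \cup \pi^j_y)$. The edge $\{\pi^i_x,\pi^j_y\}$ of $G(\pi^i,\pi^j)$ certifies $\pi^i_x \cap \pi^j_y = \emptyset$, so $(\pi^i_x, \pi^j_y, R)$ is a partition of $\mathcal{M}$.

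First I would establish that $R$ is acceptable to $i$. Combining the two expressions $u_i(\mathcal{M}) = u_i(\pi^i_x)+u_i(\pi^i_z)+u_i(\pi^i_w)$ and $u_i(\mathcal{M}) = u_i(\pi^i_x)+u_i(\pi^j_y)+u_i(R)$ gives
$$u_i(R) = u_i(\pi^i_z) + u_i(\pi^i_w) - u_i(\pi^j_y) \geq u_i(\pi^i_w) \geq u_i^\text{MMS}(I),$$
using the hypothesis $u_i(\pi^i_z) \geq u_i(\pi^j_y)$ and the fact that $\pi^i$ is MMS for $i$. Together with $u_i(\pi^i_x) \geq u_i^\text{MMS}(I)$ and $u_j(\pi^j_y) \geq u_j^\text{MMS}(I)$, this makes both $\pi^i_x$ and $R$ acceptable for $i$ and $\pi^j_y$ acceptable for $j$. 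To locate an acceptable bundle for $k$, I would use that $\pi^k$ is MMS for $k$, so $u_k(\mathcal{M}) \geq 3 u_k^\text{MMS}(I)$, and conclude by a pigeonhole argument that at least one of $u_k(\pi^i_x)$, $u_k(\pi^j_y)$, $u_k(R)$ is at least $u_k^\text{MMS}(I)$.

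I would then split into three cases according to which bundle satisfies $k$. If $u_k(R) \geq u_k^\text{MMS}(I)$, the assignment $(\pi^i_x, \pi^j_y, R)$ to $(i, j, k)$ is directly MMS. If $u_k(\pi^i_x) \geq u_k^\text{MMS}(I)$, the reshuffled assignment $(R, \pi^j_y, \pi^i_x)$ to $(i, j, k)$ works, using the bound on $u_i(R)$ derived above. The remaining case $u_k(\pi^j_y) \geq u_k^\text{MMS}(I)$ I would handle via the reduction rule that allocates $\pi^j_y$ to $k$: the partition $(\pi^i_x, R)$ of $\mathcal{M}\setminus\pi^j_y$ witnesses that the MMS guarantee of $i$ is preserved, and the partition formed by the remaining two bundles of $\pi^j$ witnesses preservation for $j$. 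Proposition \ref{prop:2agents} then supplies an MMS allocation for the resulting two-agent sub-instance, which Proposition \ref{prop:validreduction} lifts back to $I$.

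The main obstacle is the third case, where no single direct assignment of $(\pi^i_x, \pi^j_y, R)$ to agents is guaranteed to satisfy both $i$ and $j$; passing to a two-agent sub-instance is precisely where the flexibility of Proposition \ref{prop:2agents} becomes essential. A minor technicality is that the reduction rule is formally defined only when $\pi^j_y$ is nonempty, but in the degenerate case $\pi^j_y = \emptyset$ one can directly assign the empty bundle to $k$ (noting $u_k^\text{MMS}(I) \leq 0$ there) and invoke Proposition \ref{prop:2agents} on $\{i,j\}$ with the full item set.
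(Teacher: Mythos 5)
Your proposal is correct and follows essentially the same route as the paper: the same tripartition $(\pi^i_x,\ \pi^j_y,\ \mathcal{M}\setminus(\pi^i_x\cup\pi^j_y))$, the same additivity computation showing agent $i$ accepts both $\pi^i_x$ and the remainder, and the same pigeonhole argument locating an acceptable part for the third agent. The only divergence is your third case: where you pass to a two-agent sub-instance via Proposition \ref{prop:2agents}, the paper instead observes (again by additivity, since $\pi^j_y$ is a bundle of $\pi^j$) that agent $j$ is satisfied by $\pi^j_y$ and by at least one of the other two parts, so letting the third agent pick first always succeeds directly; both resolutions are valid.
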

\begin{proof}
	Without loss of generality, assume $G(\pi^1, \pi^2)$ contains the edge $\{\pi^1_1, \pi^2_1\}$ and $u_1(\pi^1_2) \geq u_1(\pi^2_1)$. Consider the allocation $\pi = (\pi^1_1, \pi^2_1, \mathcal{M} \setminus (\pi^1_1 \cup \pi^2_1))$. This allocation is well-defined because $\pi^1_1$ is disjoint with $\pi^2_1$. We have
	\begin{align*}
		u_1(\mathcal{M} \setminus (\pi^1_1 \cup \pi^2_1)) &= u_1(\mathcal{M} \setminus \pi^1_1) - u_1(\pi^2_1) \\
		&= u_1(\pi^1_2 \cup \pi^1_3) - u_1(\pi^2_1) \\
		&\geq u_1(\pi^1_2 \cup \pi^1_3) - u_1(\pi^1_2) \\
		&= u_1(\pi^1_3) \geq u_1^\text{MMS}(I)
	\end{align*}
	So, the first and third bundles of $\pi$ both satisfy agent 1. On the other hand, since the bundle $\pi_1^2$ of $\pi$ is also a bundle of $\pi^2$, the additivity of $u_2$ implies that one of the bundles $\pi_1^1$ and $\mathcal{M} \setminus (\pi^1_1 \cup \pi^2_1)$ of $\pi$ satisfies agent 2. Without loss of generality, assume $\pi_1^1$ satisfies agent 2. Hence, the first and second bundles of $\pi$ both satisfy agent 2. Finally, at least one of the three bundles of $\pi$ satisfies agent 3. Let agent 3 pick a bundle in $\pi$ satisfying them first. Regardless of which bundle agent 3 picks, it is clearly possible to assign the remaining two bundles to agents 1 and 2 in a way that satisfies them both.
\end{proof}

\begin{corollary}\label{cor:2edges}
	Let $I$ be a 3-agent instance. For each agent $i$, let $\pi^i$ be an allocation that is MMS for agent $i$. If for a pair of agents $i \neq j$, the graph $G(\pi^i, \pi^j)$ contains two edges, then $I$ admits an MMS allocation. In particular, if any $\pi^i$ contains a bundle of size at most one, then $I$ admits an MMS allocation.
\end{corollary}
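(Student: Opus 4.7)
The plan is to first attempt applying Proposition \ref{prop:edge} to each of the two edges of $G(\pi^i, \pi^j)$ in each of the two directions obtained by interchanging the roles of $i$ and $j$; if none of these four attempts yields an MMS allocation, I will build one directly from the MMS allocation $\pi^j$ or $\pi^i$. Label the edges $e_1 = \{\pi^i_{x_1}, \pi^j_{y_1}\}$ and $e_2 = \{\pi^i_{x_2}, \pi^j_{y_2}\}$; their distinctness forces $x_1 \neq x_2$ or $y_1 \neq y_2$ (or both).

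The key observation is that failure of the $i$-side application of Proposition \ref{prop:edge} on edge $e_k$ means $u_i(\pi^j_{y_k}) > u_i(\pi^i_z)$ for every $z \neq x_k$, and since $\pi^i$ is MMS for $i$ each such $u_i(\pi^i_z) \geq u_i^\text{MMS}(I)$, so this forces $u_i(\pi^j_{y_k}) > u_i^\text{MMS}(I)$. The $j$-side failure on $e_k$ symmetrically forces $u_j(\pi^i_{x_k}) > u_j^\text{MMS}(I)$. Hence, assuming all four attempts fail: when $y_1 \neq y_2$ we have $u_i(\pi^j_{y_1}), u_i(\pi^j_{y_2}) > u_i^\text{MMS}(I)$ on two distinct bundles of $\pi^j$, and when $x_1 \neq x_2$ we symmetrically have $u_j(\pi^i_{x_1}), u_j(\pi^i_{x_2}) > u_j^\text{MMS}(I)$ on two distinct bundles of $\pi^i$. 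Since the edges are distinct, at least one of these two cases applies.

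In the case $y_1 \neq y_2$ I would construct the MMS allocation from $\pi^j$ as follows. Let $k$ be the third agent. I invoke the standard fact that for any allocation $\tau$ and any agent $a$ some bundle satisfies $u_a(\tau_c) \geq u_a^\text{MMS}(I)$ (otherwise summing the strict inequalities over $c$ contradicts $u_a(\mathcal{M}) \geq 3\, u_a^\text{MMS}(I)$, which follows from $\pi^a$ being MMS for $a$). Pick $c^*$ with $u_k(\pi^j_{c^*}) \geq u_k^\text{MMS}(I)$, hand $\pi^j_{c^*}$ to $k$, and distribute the remaining two bundles: at least one of them lies in $\{\pi^j_{y_1}, \pi^j_{y_2}\}$ and hence satisfies $i$, while any $\pi^j$-bundle satisfies $j$ because $\pi^j$ is MMS for $j$. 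The case $x_1 \neq x_2$ is handled by the symmetric construction starting from $\pi^i$ with the roles of $i$ and $j$ swapped.

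The ``in particular'' statement follows immediately: if $\pi^i$ has a bundle $\pi^i_x$ of size at most one, its single item (if any) lies in a unique $\pi^j$-bundle, so $\pi^i_x$ is disjoint from the other (at least two) bundles of $\pi^j$, producing at least two edges in $G(\pi^i, \pi^j)$, and the main statement finishes the job. The main obstacle I anticipate is the bookkeeping of the four failure inequalities and checking that the strict-inequality consequence on two distinct bundles absorbs every possible choice of $c^*$ --- namely, in each of the three possibilities $c^* = y_1$, $c^* = y_2$, and $c^* \notin \{y_1, y_2\}$, the two remaining bundles can indeed be split between $i$ and $j$ to satisfy both.
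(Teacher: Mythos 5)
Your proposal is correct and follows essentially the same route as the paper: apply Proposition \ref{prop:edge} to the available edges, and when all applications fail, conclude that two bundles of one of the fixed MMS allocations strictly satisfy the other agent, then finish by letting the third agent pick a satisfying bundle first (the paper phrases this as the picking order $3,1,2$). The only cosmetic difference is that you handle the two orientations explicitly via the cases $x_1 \neq x_2$ and $y_1 \neq y_2$, where the paper absorbs this into a without-loss-of-generality relabeling.
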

\begin{proof}
	Without loss of generality, assume $G(\pi^1, \pi^2)$ contains two edges.  Suppose the two edges share an endvertex. Again without loss of generality, assume $\{\pi^1_1, \pi^2_1\}$ and $\{\pi^1_1, \pi^2_2\}$ are edges. By Proposition \ref{prop:edge}, if $u_1(\pi^1_2) \geq u_1(\pi^2_1)$ or if $u_1(\pi^1_2) \geq u_1(\pi^2_2)$, then $I$ admits an MMS allocation. Otherwise, $u_1(\pi^2_1) > u_1(\pi^1_2)$ and $u_1(\pi^2_2) > u_1(\pi^1_2)$, so the two bundles $\pi^2_1$ and $\pi^2_2$ of the allocation $\pi^2$ both satisfy agent 1. Using the picking order 3, 1, 2 on the $\pi^2$ (i.e. let agent 3 pick a bundle satisfying them first, then agent 1, then agent 2) results in an MMS allocation.
	
	Otherwise, the two edges do not share an endvertex. Without loss of generality, assume $\{\pi^1_1, \pi^2_1\}$ and $\{\pi^1_2, \pi^2_2\}$ are edges. Proposition \ref{prop:edge} implies that if $u_1(\pi^1_2) \geq u_1(\pi^2_1)$ or if $u_1(\pi^1_1) \geq u_1(\pi^2_2)$, then $I$ admits an MMS allocation. Otherwise, we have $u_1(\pi^2_1) > u_1(\pi^1_2)$ and $u_1(\pi^2_2) > u_1(\pi^1_1)$. Again, $\pi^2_1$ and $\pi^2_2$ both satisfy agent 1. Using the picking order 3, 1, 2 on the $\pi^2$ results in an MMS allocation.
	
	Suppose some $\pi^i$ contains a bundle of size at most one, say $\pi^i_1$. For any $j \neq i$, the bundle $\pi^i_1$ can only intersect with at most one bundle of $\pi^j$ because it contains at most one item. Equivalently, the bundle $\pi^i_1$ is disjoint with at least two bundles of $\pi^j$. Hence, the graph $G(\pi^i, \pi^j)$ contains two edges, so $I$ admits an MMS allocation.
\end{proof}

We also make use of the following proposition from \citet{feige2021tight}. Since its proof depends only on the additivity of the utility functions and the fact that $I$ has SOP, whether the items are goods or chores makes no difference to its validity. For completeness, we include its proof in Appendix \ref{appen:B}.

\begin{restatable}{proposition}{propthreenine}\label{prop:39}
	(Proposition 23 in \citep{feige2021tight}) Let $I$ be a 3-agent 9-item instance. If each $\pi^i$ contains three 3-bundles, then $I$ admits an MMS allocation.
\end{restatable}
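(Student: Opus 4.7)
The plan is to prove Proposition~\ref{prop:39} via a Hall-theorem matching argument on each of $\pi^1, \pi^2, \pi^3$, followed by a case analysis when no such matching succeeds. A recurring tool is the averaging identity: for each agent $j$, since $\pi^j$ is MMS for $j$ with three bundles,
\[ u_j(\mathcal{M}) \;=\; \sum_k u_j(\pi^j_k) \;\geq\; 3\, u_j^\text{MMS}(I). \]

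For each $i$, I would form the bipartite graph $G_i$ between the three agents and the three bundles of $\pi^i$, joining $j$ to $\pi^i_k$ whenever $u_j(\pi^i_k) \geq u_j^\text{MMS}(I)$. Agent $i$ is adjacent to every bundle of $\pi^i$ by definition; for each $j \neq i$, applying the averaging identity inside $\pi^i$ gives $\sum_k u_j(\pi^i_k) = u_j(\mathcal{M}) \geq 3 u_j^\text{MMS}(I)$, so $j$ is adjacent to at least one bundle. If some $G_i$ satisfies Hall's condition, a perfect matching gives an MMS allocation and we are done.

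Otherwise, Hall's condition fails in every $G_i$. Since agent $i$ is adjacent to all bundles of $G_i$, the only deficient subset of $G_i$ is the pair of other two agents, so for each $i$ there is a unique bundle $B^i \in \pi^i$ satisfying both of them. The strictness then follows: for $j \neq i$, the two other bundles of $\pi^i$ each have $u_j$-value $< u_j^\text{MMS}(I)$, so
\[ u_j(B^i) \;>\; u_j(\mathcal{M}) - 2\, u_j^\text{MMS}(I) \;\geq\; u_j^\text{MMS}(I), \]
and hence $B^i$ satisfies all three agents. If $B^1, B^2, B^3$ are pairwise disjoint, they partition $\mathcal{M}$ into three 3-bundles each satisfying every agent, and any assignment is an MMS allocation. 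If $B^1 = B^2 = B^3 = B$, I would give $B$ to agent 1; the partitions $\pi^2 \setminus \{B\}$ and $\pi^3 \setminus \{B\}$ of $\mathcal{M} \setminus B$ witness that the 2-agent sub-instance on $\mathcal{M} \setminus B$ has MMS guarantees for agents 2 and 3 at least as large as their original MMS values, so Proposition~\ref{prop:2agents} produces an MMS allocation of the sub-instance that, together with $B$, is MMS for $I$.

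The main obstacle is the intermediate cases, where $B^1, B^2, B^3$ overlap but are not all equal. Here I would exploit the strict inequality $u_j(B^i) > u_j^\text{MMS}(I)$ for $j \neq i$, which gives slack to trade items between overlapping $B^i$ and $B^{i'}$ without destroying satisfaction. Concretely I would case-split on $|B^i \cap B^{i'}|$ for each pair $i \neq i'$ and, in each case, either relabel the roles of the $B^i$'s to reduce to the all-equal case, or construct an explicit 3-bundle partition by combining portions of the overlapping $B^i$'s with the other bundles of the $\pi^i$'s, invoking the averaging identity and SOP to verify the required satisfaction inequalities.
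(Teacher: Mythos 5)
Your setup is a genuinely different route from the paper's: the paper never forms a matching graph here, but instead case-splits directly on the intersection pattern between bundles of different $\pi^i$'s (some pair of bundles identical; some pair sharing exactly two items, resolved by an SOP-based item swap; or every cross pair sharing exactly one item, resolved by an explicit construction around the least valuable item $o_9$). Your Hall argument is sound as far as it goes: if Hall fails in $G_i$ then indeed there is a unique bundle $B^i$ of $\pi^i$ satisfying both other agents, the strict inequality $u_j(B^i) > u_j^\text{MMS}(I)$ for $j \neq i$ follows from the averaging identity, and your treatments of the pairwise-disjoint case and the all-equal case (and, by the same two-agent reduction, the case where exactly two of the $B^i$ coincide) are correct.

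The gap is that the remaining configurations --- the $B^i$ distinct but overlapping --- are exactly where the real work of this proposition lies, and your plan for them does not hold up. The strict inequalities $u_j(B^i) > u_j^\text{MMS}(I)$ provide no quantitative margin: the slack can be arbitrarily small, so "trading items between overlapping $B^i$ and $B^{i'}$ without destroying satisfaction" is not justified; an exchanged item can swing a bundle's value by far more than the available slack. The paper's swap argument works for a different reason: when two bundles differ in exactly one item, SOP lets every agent agree on which of the two exchanged items is at least as valuable, so the swap is monotone for the relevant agent rather than merely "small." Moreover, some of these configurations resist the natural follow-up moves from your position: e.g., if $B^1 \cap B^2 \neq \emptyset$ with $B^1 \neq B^2$, assigning $B^1$ to agent 2 leaves $\mathcal{M} \setminus B^1 = \pi^1_a \cup \pi^1_b$ to be split between agents 1 and 3, and you know $u_3(\pi^1_a), u_3(\pi^1_b) < u_3^\text{MMS}(I)$ with no lower bound on $u_3(\mathcal{M} \setminus B^1)$, so it is not clear any split satisfies both remaining agents. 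Until these overlapping cases are carried out explicitly (or reduced to the paper's trichotomy), the proof is incomplete.
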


\begin{theorem}\label{thm:38}
	Let $I$ be an instance of the fair division. Suppose $n=3$ and $m=8$. Then, $I$ admits an MMS allocation.
\end{theorem}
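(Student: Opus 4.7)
The approach is to fix, for each agent $i \in [3]$, an allocation $\pi^i$ that is MMS for $i$ and to case-analyze by the bundle sizes. Corollary \ref{cor:2edges} disposes immediately of any case in which some $\pi^i$ has a bundle of size at most one. Otherwise every bundle of every $\pi^i$ has size at least two, and since the three sizes of each $\pi^i$ sum to $m = 8$, the only possible sorted profiles are $(2, 2, 4)$ and $(2, 3, 3)$. If some $\pi^i$ has sorted sizes $(2, 2, 4)$, then for any $j \neq i$ each of the two size-$2$ bundles of $\pi^i$ meets at most two of the three bundles of $\pi^j$ and is therefore disjoint from at least one of them, so $G(\pi^i, \pi^j)$ contains at least two edges and Corollary \ref{cor:2edges} again supplies an MMS allocation.

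The remaining, and main, case is when every $\pi^i$ has sorted sizes $(2, 3, 3)$. The plan is to augment $I$ with a single dummy item $o_9$ satisfying $u_i(o_9) = 0$ for every agent $i$, producing a $9$-item instance $I'$; since dummy items preserve every MMS guarantee, $u_i^{\text{MMS}}(I) = u_i^{\text{MMS}}(I')$ for every $i$. Inserting $o_9$ into the unique size-$2$ bundle of each $\pi^i$ yields an allocation of $I'$ consisting of three $3$-bundles that is still MMS for agent $i$, because $o_9$ contributes zero utility. Proposition \ref{prop:39} then provides an MMS allocation of $I'$, from which deleting $o_9$ (which changes no bundle utility) produces an MMS allocation of $I$.

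The hard part is precisely the $(2, 3, 3)$ case: no $\pi^i$ has a singleton bundle, so Lemma \ref{lemma:1reduction} does not apply; no $\pi^i$ has $n - 1 = 2$ bundles of size at most two, so Lemma \ref{lemma:atmost2reduction} does not apply either; and the single-good reduction of Proposition \ref{prop:delete_one_good} is unavailable in the mixed-manna setting. The dummy-item device sidesteps this obstruction by converting the $(2, 3, 3)$ profile into the $(3, 3, 3)$ profile required to invoke Proposition \ref{prop:39}, whose proof is insensitive to the signs of the utilities.
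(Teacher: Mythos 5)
Your proposal is correct and follows essentially the same route as the paper's proof: dispose of small bundles via Corollary \ref{cor:2edges}, handle the $(2,2,4)$ profile by the two-edge argument, and reduce the $(2,3,3)$ profile to Proposition \ref{prop:39} by inserting a zero-utility dummy item into the 2-bundle. The only difference is your closing commentary on why the reduction lemmas fail here, which is accurate but not needed for the argument.
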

\begin{proof}
	For each agent $i$, let $\pi^i$ be an allocation that is MMS for $i$. If some $\pi^i$ contains a bundle of size at most one, then we are done by Corollary \ref{cor:2edges}. Otherwise, each $\pi^i$ contains only bundles of size at least two. Note that because $m=8$, the number of 2-bundles each $\pi^i$ contains is at most two.
	
	We distinguish two cases. In the first case, some $\pi^i$ contains two 2-bundles. Clearly, $\pi^i$ contains two 2-bundles and one 4-bundle. For any $\pi^j$, each of the two 2-bundles of $\pi^i$ intersects with at most two bundles of $\pi^j$. In other words, each of them is disjoint with a bundle of $\pi^j$, so the graph $G(\pi^i, \pi^j)$ contains two edges. By Corollary \ref{cor:2edges}, $I$ admits an MMS allocation.
	
	In the second case, no $\pi^i$ contains two 2-bundles, so each $\pi^i$ contains one 2-bundle and two 3-bundles. Let $I'$ be the instance obtained from $I$ by adding a dummy item with zero utility to every agent. For each $i$, let ${\pi^i}'$ be the allocation obtained from $\pi^i$ by adding the dummy item to the 2-bundle of $\pi^i$. Clearly, for each agent $i$, the allocation ${\pi^i}'$ is MMS for $i$ for the instance $I'$ and contains three 3-bundles. By Proposition \ref{prop:39}, $I'$ admits an MMS allocation. Since the addition of a dummy item does not affect the existence of MMS allocations, $I$ admits an MMS allocation as well.
	
\end{proof}

\subsection{$I$ contains a non-negative agent}

In this subsection, we prove Theorem \ref{thm:main1}, which constitutes the second part of Theorem \ref{thm:main}. Our proof uses the following preprocessing operation. Let $I$ be an instance of the fair division problem that contains an agent $i$ for whom $u_i^\text{MMS}(I) > 0$. We define the {\em mimicked instance} $I^{(i)}$ {\em of $I$ with respect to agent $i$} as the instance obtained from $I$ by replacing the utility function $u_j$ with $u_i$ for each agent $j$ such that $u_j^\text{MMS}(I) \leq 0$. Thus, every agent in the instance $I^{(i)}$ has positive MMS guarantee.

\begin{proposition}\label{prop:mimic}
	Let $I$ be an instance of the fair division problem and $i$ be an agent for whom $u_i^\text{MMS}(I) > 0$. If $I^{(i)}$ admits an MMS allocation, then $I$ does as well.
\end{proposition}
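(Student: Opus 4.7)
My plan is to construct an MMS allocation for $I$ directly from one for $I^{(i)}$ by exploiting two facts: any agent $j$ with $u_j^\text{MMS}(I) \leq 0$ is satisfied by the empty bundle (since $u_j(\emptyset) = 0$), and every bundle in an MMS allocation of $I^{(i)}$ is worth at least $u_i^\text{MMS}(I) > 0$ to agent $i$, because in $I^{(i)}$ the utility function $u_i$ is what each mimicking agent uses to evaluate its bundle.

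First I would partition $\mathcal{N}$ into $S = \{ j : u_j^\text{MMS}(I) > 0 \}$ (which contains $i$) and $T = \{ j : u_j^\text{MMS}(I) \leq 0 \}$, and record the identities $u_j^\text{MMS}(I^{(i)}) = u_j^\text{MMS}(I)$ for $j \in S$ and $u_j^\text{MMS}(I^{(i)}) = u_i^\text{MMS}(I)$ for $j \in T$; both follow because an agent's MMS guarantee depends only on its own utility function together with $n$ and $\mathcal{M}$. Letting $\pi = (\pi_1, \dots, \pi_n)$ be the given MMS allocation for $I^{(i)}$, unpacking the definitions then yields $u_j(\pi_j) \geq u_j^\text{MMS}(I)$ for each $j \in S$ and $u_i(\pi_j) \geq u_i^\text{MMS}(I)$ for each $j \in T \cup \{i\}$.

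Next I would define a candidate allocation $\pi'$ for $I$ by reassigning to agent $i$ all bundles that $\pi$ gave to $T$-agents: set $\pi'_j = \pi_j$ for $j \in S \setminus \{i\}$, $\pi'_j = \emptyset$ for $j \in T$, and $\pi'_i = \pi_i \cup \bigcup_{j \in T} \pi_j$. Disjointness and coverage of $\mathcal{M}$ are immediate from the corresponding properties of $\pi$. Checking MMS is then a three-way case split: for $j \in S \setminus \{i\}$, $u_j(\pi'_j) = u_j(\pi_j) \geq u_j^\text{MMS}(I)$ by the MMS property of $\pi$; for $j \in T$, $u_j(\pi'_j) = 0 \geq u_j^\text{MMS}(I)$ by the definition of $T$; and for $i$, additivity together with $u_i(\pi_j) \geq u_i^\text{MMS}(I) > 0$ for each $j \in T$ yields $u_i(\pi'_i) = u_i(\pi_i) + \sum_{j \in T} u_i(\pi_j) \geq u_i(\pi_i) \geq u_i^\text{MMS}(I)$.

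I do not expect any serious obstacle. The only delicate point is the safety of the absorption step for agent $i$, and this is guaranteed precisely by the mimicking construction: because every $T$-agent in $I^{(i)}$ is forced to evaluate its own bundle using $u_i$, the MMS property of $\pi$ hands us the bound $u_i(\pi_j) \geq u_i^\text{MMS}(I) > 0$ for each $j \in T$, so piling those bundles onto agent $i$ can only raise $u_i(\pi'_i)$.
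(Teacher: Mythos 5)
Your proof is correct and follows essentially the same approach as the paper: take an MMS allocation of $I^{(i)}$ and reassign bundles of non-positive-MMS agents to agent $i$, using the fact that the mimicking construction forces each such bundle to be worth at least $u_i^\text{MMS}(I) > 0$ to agent $i$, so absorption cannot hurt $i$ while the emptied agents are satisfied by $u_j(\emptyset) = 0 \geq u_j^\text{MMS}(I)$. The only (harmless) difference is that you reassign \emph{every} such agent's bundle, whereas the paper reassigns only those bundles that fail to satisfy their owner under the original utility $u_j$, which adds one sub-case but changes nothing essential.
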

\begin{proof}
	For each agent $j$, we use $u_j$ and $u_j'$ to denote its utility function in $I$ and $I^{(i)}$, respectively. Let $\pi' = (\pi_1', \pi_2', \dots, \pi_n')$ be an MMS allocation for the instance $I^{(i)}$. We construct a new allocation from $\pi'$ as follows. For each agent $j \neq i$ such that $u_j(\pi_j') < u_j^\text{MMS}(I)$, we move all of the items in the bundle $\pi_j'$ to the bundle $\pi_i'$. We denote the allocation that results when no more items can be moved this way by $\pi = (\pi_1, \pi_2, \dots, \pi_n)$.
	
	We show that $\pi$ is MMS for the instance $I$ by showing that $u_j(\pi_j) \geq u_j^\text{MMS}(I)$ for each agent $j$. The first observation is that each agent $j \neq i$ with $u_j^\text{MMS}(I) > 0$ has the same utility function in $I$ and $I^{(i)}$, and hence has the same MMS guarantee. Thus, $u_j(\pi_j') \geq u_j^\text{MMS}(I)$ and $\pi$ allocates $\pi_j'$ to agent $j$. Therefore, the only agents that remain to be considered are agents $j \neq i$ such that $u_j^\text{MMS}(I) \leq 0$ and agent $i$.
	
	First consider the agents $j \neq i$ such that $u_j^\text{MMS}(I) \leq 0$. In this case, we have $u_j' = u_i$.	If $u_j(\pi_j') \geq u_j^\text{MMS}(I)$, then the items in the bundle $\pi_j'$ were not moved to the bundle $\pi_i'$ during the construction of $\pi$ from $\pi'$. Hence, $\pi$ allocates $\pi_j'$ to agent $j$ and satisfies agent $j$ because $u_j(\pi_j') \geq u_j^\text{MMS}(I)$. Otherwise, we have $u_j(\pi_j') < u_j^\text{MMS}(I)$ instead. In this case, the items in the bundle $\pi_j'$ were moved to the bundle $\pi_i'$ during the construction of $\pi$, so $\pi_j = \emptyset$ and we have $u_j(\pi_j) = 0 \geq u_j^\text{MMS}(I)$.
	
	It remains to consider agent $i$. In the allocation $\pi$, agent $i$ receives $\pi_i'$ together with every bundle $\pi_j'$ such that $u_j(\pi_j') < u_j^\text{MMS}(I)$. Since $\pi'$ is MMS for the instance $I^{(i)}$, we have $u_j'(\pi_j') \geq u_j^\text{MMS}(I^{(i)})$. This implies that $u_j \neq u_j'$ because otherwise, we would have $u_j^\text{MMS}(I) > u_j(\pi_j') = u_j'(\pi_j') \geq u_j^\text{MMS}(I^{(i)}) = u_j^\text{MMS}(I)$, a contradiction. Since $u_j \neq u_j'$, we have $u_j' = u_i$ by construction, so $u_j^\text{MMS}(I^{(i)}) = u_i^\text{MMS}(I^{(i)})$. Thus, $u_i(\pi_j') = u_j'(\pi_j') \geq u_j^\text{MMS}(I^{(i)}) = u_i^\text{MMS}(I^{(i)}) = u_i^\text{MMS}(I)$. On the other hand, we also have $u_i(\pi_i') \geq u_i^\text{MMS}(I^{(i)}) = u_i^\text{MMS}(I)$ because $\pi'$ is MMS for $I^{(i)}$. Thus, $\pi_i$ is a union of bundles that each has utility at least $u_i^\text{MMS}(I)$ to agent $i$. Since $u_i^\text{MMS}(I) > 0$ and $u_i$ is additive, we have $u_i(\pi_i) \geq u_i^\text{MMS}(I)$.
\end{proof}

\begin{theorem}\label{thm:main1}
	Let $I$ be an instance of the fair division problem. Suppose $m \leq n+5$ and $I$ contains a non-negative agent. Then, $I$ admits an MMS allocation.
\end{theorem}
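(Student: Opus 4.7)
The plan is to combine the mimicking technique of Proposition \ref{prop:mimic} with induction on $n$ and the reduction lemmas from the preliminaries. The idea is that mimicking transforms our problem into one in which every agent has positive MMS guarantee; in that setting, a bundle-size counting argument analogous to Feige et al.'s for goods goes through. Specifically, if the non-negative agent $i$ satisfies $u_i^\text{MMS}(I) > 0$, I would invoke Proposition \ref{prop:mimic} and work with the mimicked instance $I^{(i)}$. Every agent in $I^{(i)}$ now has positive MMS guarantee, since agents with originally non-positive MMS inherit $u_i$ (and hence the MMS value $u_i^\text{MMS}(I) > 0$), while agents with originally positive MMS are unaffected.

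I would then induct on $n$, with $m = n + 5$ without loss of generality (padding with dummy items if necessary), to prove: every SOP instance with $m \leq n + 5$ in which every agent has positive MMS admits an MMS allocation. The base cases $n = 2$ and $n = 3$ are furnished by Proposition \ref{prop:2agents} and Theorem \ref{thm:38}. For the inductive step with $n \geq 4$, positive MMS forces every bundle in any MMS partition to be non-empty, so each agent's MMS partition consists of $n$ non-empty bundles summing to $n + 5$ items. If every agent has some MMS partition containing a singleton bundle, Lemma \ref{lemma:1reduction} produces a valid reduction rule allocating one item to one agent; the reduced instance has $n - 1$ agents and $m - 1 \leq (n-1) + 5$ items, and every remaining agent still has positive MMS (since valid reductions preserve MMS), so induction closes this subcase. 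Otherwise, some agent's MMS partitions all lack singletons, forcing all $n$ bundles to have size at least $2$; since $2n \leq n + 5$ forces $n \in \{4, 5\}$, the only possible size multisets are $(3, 2, 2, 2)$ for $n = 4$ and $(2, 2, 2, 2, 2)$ for $n = 5$, both of which contain $n - 1$ bundles of size at most $2$. Lemma \ref{lemma:atmost2reduction} then either yields an MMS allocation directly or a valid reduction rule removing $k \geq 1$ agents and at least $k$ items, and the resulting smaller instance still satisfies the inductive hypothesis.

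The step I expect to be the main obstacle is the corner case $u_i^\text{MMS}(I) = 0$, since Proposition \ref{prop:mimic} is stated only for strictly positive MMS. The cleanest route is to observe that its proof in fact extends to $u_i^\text{MMS}(I) \geq 0$: the crucial inequality $u_i(\pi_i) \geq u_i^\text{MMS}(I)$ follows from additivity because $\pi_i$ is a union of bundles each of value at least $u_i^\text{MMS}(I) \geq 0$, whose sum is therefore at least $u_i^\text{MMS}(I)$. After mimicking in this case, some agents may have MMS exactly zero, which slightly complicates the bundle-size counting (empty bundles become admissible in their MMS partitions). The resolution is likely to first peel off such agents by assigning each a bundle of non-negative value (e.g., the empty bundle) and then apply the inductive machinery to the remaining agents with strictly positive MMS.
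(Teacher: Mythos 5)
Your main case ($u_i^\text{MMS}(I) > 0$) is exactly the paper's argument: mimic with respect to $i$ via Proposition \ref{prop:mimic}, induct on $n$ with base cases from Proposition \ref{prop:2agents} and Theorem \ref{thm:38}, use positivity of the MMS guarantees to rule out empty bundles, and split on whether every agent's MMS partition has a singleton (Lemma \ref{lemma:1reduction}) or some agent's partitions force the size profile $(3,2,2,2)$ or $(2,2,2,2,2)$ (Lemma \ref{lemma:atmost2reduction}). That part is correct and needs no changes.

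The gap is in your handling of the corner case $u_i^\text{MMS}(I) = 0$. You are right that Proposition \ref{prop:mimic} extends verbatim to $u_i^\text{MMS}(I) \geq 0$, but the proposed ``peel off the zero-MMS agents by giving them empty bundles'' step does not close the argument: removing $k$ agents while removing zero items leaves an instance with $n-k$ agents and still $n+5$ items, which violates the invariant $m' \leq n' + 5$ that your induction (and every reduction lemma in the paper) relies on. Such an instance need not admit an MMS allocation at all, so this is not a technicality. In the extreme sub-case where \emph{every} agent has non-positive MMS, peeling removes all agents and leaves all items unassigned. The paper's resolution avoids this entirely by choosing a different mimicking anchor: if \emph{some} agent has strictly positive MMS, mimic with respect to \emph{that} agent (not the originally given non-negative one), so that every agent in $I^{(i)}$ is strictly positive and no empty bundles ever arise; if \emph{no} agent has positive MMS, then the non-negative agent $i$ has MMS exactly $0$, and one simply allocates all of $\mathcal{M}$ to $i$ --- this satisfies every $j \neq i$ because $u_j(\emptyset) = 0 \geq u_j^\text{MMS}(I)$, and satisfies $i$ because $u_i(\mathcal{M})$ is the sum of the $n$ bundles of an allocation that is MMS for $i$, each of value at least $0$. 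Replacing your peeling step with this two-way case split repairs the proof.
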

\begin{proof}
	Without loss of generality, assume $m = n+5$. Since $I$ contains a non-negative agent, there exists an agent $i$ for whom $u_i^\text{MMS}(I) \geq 0$. First suppose that every agent has non-positive MMS guarantee. In particular, this implies that $u_i^\text{MMS}(I) = 0$. We show that the allocation $\pi = (\pi_1, \pi_2, \dots, \pi_n)$ which allocates every item to agent $i$ is an MMS allocation. Each agent $j \neq i$ receives no items, so $u_j(\pi_j) = 0 \geq u_j^\text{MMS}(I)$. On the other hand, let $\pi^i$ be an MMS allocation for agent $i$. By definition, we have $u_i(\pi_j^i) \geq u_i^\text{MMS}(I) = 0$ for each bundle $\pi_j^i$ of $\pi^i$. By the additivity of $u_i$, we have $u_i(\mathcal{M}) = \sum_{\pi_j^i \in \pi^i} u_i(\pi_j^i) \geq 0 = u_i^\text{MMS}(I)$. Thus, $\pi$ is an MMS allocation for the instance $I$.

	Otherwise, $I$ contains an agent $i$ with positive MMS guarantee. Let $I^{(i)}$ be the mimicked instance of $I$ with respect to agent $i$. By construction, $I^{(i)}$ is an instance in which every agent has positive MMS guarantee. Proposition \ref{prop:mimic} implies that if $I^{(i)}$ admits an MMS allocation, then $I$ does as well, so it suffices to show that $I^{(i)}$ admits an MMS allocation.
	
	We show that $I^{(i)}$ admits an MMS allocation by induction on the number $n$ of agents. If $n \leq 3$, then $I^{(i)}$ admits an MMS allocation by Proposition \ref{prop:2agents} or Theorem \ref{thm:38}. Assume $n \geq 4$ and that for all $n' < n$, any $n'$-agent $m'$-item instance $I'$ of the fair division problem with $m' \leq n' + 5$ in which every agent has positive MMS guarantee admits an MMS allocation.
	
	Suppose that for some agent $j$, there exists an allocation $\pi^j$ that is MMS for agent $j$ and contains no singletons. Since $u_j^\text{MMS}(I^{(i)}) > 0$, every bundle of $\pi^j$ has positive utility to agent $j$. In particular, $\pi^j$ contains no empty bundles, so every bundle of $\pi^j$ contains at least two items. Since $\pi^j$ contains exactly $n$ bundles, we have $2n \leq m \leq n+5$, so $(n, m)$ is one of $(4, 9)$ and $(5, 10)$. It follows that $\pi^j$ contains $n-1$ bundles of size 2. By Lemma \ref{lemma:atmost2reduction}, $I$ admits an MMS allocation or there exists a valid reduction rule $R$ that allocates to $k$ agents at least $k$ items for some $k > 0$. In the former case, we are done. In the latter case, let $I_R$ be the $n_R$-agent $m_R$-agent instance obtained from $I^{(i)}$ by applying $R$. We claim that $I_R$ admits an MMS allocation. Since $R$ preserves the MMS guarantees of each agent it does not delete from $I^{(i)}$, every agent in the instance $I_R$ is positive. Clearly, $m_R \leq n_R + 5$ and $n_R < n$. Thus, $I_R$ admits an MMS allocation by the induction hypothesis. By Proposition \ref{prop:validreduction}, the instance $I^{(i)}$ admits an MMS allocation.
	
	Otherwise, there is no agent $j$ for which there exists an allocation $\pi_j$ that is MMS for agent $j$ and contains no singletons. So, for each agent $i$, there exists an allocation that is MMS for agent $i$ and contains a singleton bundle. By Lemma \ref{lemma:1reduction}, there exists a valid reduction rule $R$ that allocates a single item to a single agent. let $I_R$ be the $n_R$-agent $m_R$-agent instance obtained from $I^{(i)}$ by applying $R$. Clearly, we have $m_R \leq n_R+5$ and every agent in $I_R$ has positive MMS guarantee because $R$ is a valid reduction rule. Thus, $I_R$ admits an MMS allocation by the induction hypothesis. By Proposition \ref{prop:validreduction}, the instance $I^{(i)}$ admits an MMS allocation.
\end{proof}
	
	\subsection{$I$ contains only chores agents}
	
	In this subsection, we prove Theorem \ref{thm:main2}, which constitutes the third part of Theorem \ref{thm:main}. This proof relies on the observation that instances with chores agents always have valid reduction rules as long as $m \leq 2n+1$, regardless of what other types of agents it contains.
	
	\begin{proposition}\label{prop:choresagent}
		Let $I$ be an instance with $m \leq 2n+1$. If there is a chores agent, then $I$ admits an MMS allocation or there exists a valid reduction rule that allocates to $k$ agents at least $k$ items for some $k > 0$.
	\end{proposition}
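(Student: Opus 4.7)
The plan is to show that the chores agent $i$ admits an MMS allocation $\pi$ containing at least $n-1$ bundles of size 1 or 2, after which Lemma~\ref{lemma:atmost2reduction} immediately yields the desired dichotomy. The role of the chores assumption is that moving a chore item out of a bundle can only increase that bundle's utility, so MMS-preserving rearrangements for $i$ are abundant.

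First I would derive a counting inequality from $m \leq 2n+1$. Writing $|B|, |T|, |U|, |E|$ for the numbers of bundles of $\pi$ of sizes $\geq 3$, exactly $2$, exactly $1$, and $0$, respectively, the bound $\sum_k |\pi_k| \geq 3|B|+2|T|+|U|$ together with $\sum_k |\pi_k| \leq 2n+1 = 2(|B|+|T|+|U|+|E|)+1$ gives $|B| \leq |U|+2|E|+1$. Whenever $|B| \geq 2$, this forces a size-at-most-$1$ bundle to exist in $\pi$.

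Next I would pick an MMS allocation $\pi$ for $i$ that minimizes, primarily, $|B|$ and secondarily $\sum_{\pi_k \in B}|\pi_k|$, and use the chores property to argue $|B| \leq 1$. If $\pi$ has an empty bundle $\pi_p$ and some $\pi_k \in B$, moving the top $\min(|\pi_k|-2,2)$ items of $\pi_k$ (ranked by $u_i$) into $\pi_p$ preserves MMS for $i$: the donor only gains utility since $u_i(o) \leq 0$ for every item $o$, and the recipient's new utility equals the sum of the moved items, which is at least $(\min(|\pi_k|-2,2)/|\pi_k|) \cdot u_i(\pi_k) \geq u_i^{\text{MMS}}(I)$ because $u_i^{\text{MMS}}(I) \leq 0$ and the scaling factor lies in $(0,1]$. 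This move strictly decreases $|B|$ when $|\pi_k| \leq 4$ (donor becomes size $2$) and strictly decreases $\sum_{\pi_k \in B}|\pi_k|$ when $|\pi_k| \geq 5$, contradicting minimality; hence no empty bundle coexists with a big bundle in the minimal $\pi$.

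The main obstacle is the residual case $|E|=0$, $|U| \geq 1$, $|B| \geq 2$: the size-at-most-$1$ bundle mandated by the counting inequality is now a nonempty singleton, and moving a single item from a big bundle into it can push the new bundle's size above $2$ or its utility below $u_i^{\text{MMS}}(I)$. I expect this to be the delicate step and plan to handle it by a finer multi-bundle exchange that simultaneously redistributes the singleton together with one or two big bundles, again exploiting the non-positivity of chore utilities to keep the split MMS-preserving; in the extreme case where no such rearrangement exists, the rigid structure of $\pi$ (combined with the SOP assumption, which forces the singleton item to be universally least desirable) should itself suffice either to directly build an MMS allocation for $I$ or to yield a valid reduction rule centered on that singleton. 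Once $|B| \leq 1$, at least $n-1$ bundles of $\pi$ have size at most $2$; any empty bundle among them can be converted into a singleton by the same chores-style max-utility move, giving $n-1$ bundles of size exactly $1$ or $2$, and Lemma~\ref{lemma:atmost2reduction} closes the argument.
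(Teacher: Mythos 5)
The central claim your plan rests on --- that a chores agent $i$ always admits an MMS allocation with $n-1$ bundles of size $1$ or $2$ when $m \leq 2n+1$ --- is false, so the residual case you flag as ``delicate'' ($|E|=0$, $|U|\geq 1$, $|B|\geq 2$) is not a technical loose end to be closed by a finer exchange: it is exactly where the statement fails. Take $n=4$, $m=9=2n+1$ and let agent $i$ have item utilities $(-2,-2,-2,-2,-2,-2,-3,-3,-6)$. The total is $-24$, so $u_i^{\text{MMS}}(I)\leq -6$, and the partition $\{-2,-2,-2\},\{-2,-2,-2\},\{-3,-3\},\{-6\}$ attains $-6$, so $u_i^{\text{MMS}}(I)=-6$. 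Since the total equals $4\cdot(-6)$, every bundle of every allocation that is MMS for $i$ must have utility exactly $-6$; this forces the $-6$ item to sit alone, the two $-3$ items to form a pair, and the six $-2$ items to form two triples. Hence every allocation MMS for $i$ has bundle sizes $(3,3,2,1)$, with only two bundles of size $1$ or $2$ rather than $n-1=3$. No multi-bundle rearrangement can produce the object you need, because it does not exist; consequently Lemma \ref{lemma:atmost2reduction} cannot be reached through the chores agent alone.

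The paper's proof takes a different dichotomy. It first disposes of the case where $I$ has a non-negative agent via Theorem \ref{thm:main1}, so all agents are negative and, by SOP, $o_m$ is a chore to \emph{every} agent. Then: if some agent $j$ has an MMS allocation with neither an empty nor a singleton bundle, pigeonhole on $m=2n+1$ forces $n-1$ bundles of size $2$ and one of size $3$, and Lemma \ref{lemma:atmost2reduction} applies; otherwise every agent's chosen MMS allocation contains an empty or singleton bundle, and the single-item rule giving $o_m$ to the chores agent is valid --- the chores agent is satisfied because $o_m$ alone is at least as good for it as the bundle containing $o_m$, and every other agent $j$ has $u_j(o_m)\leq u_j(\pi^j_\ell)$ for its empty-or-singleton bundle $\pi^j_\ell$, so Proposition \ref{prop:1reduction_preserve}(1) preserves its guarantee. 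In the counterexample above it is this second branch that fires. Note also that your sketch never reduces to all-negative agents, which the paper needs so that $u_j(o_m)\leq 0$ holds for every agent $j$ (otherwise the comparison against an empty bundle fails).
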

	\begin{proof}
		Without loss of generality, assume $m = 2n+1$ and agent 1 is a chores agent. If $I$ contains a non-negative agent, then Theorem \ref{thm:main1} implies $I$ admits an MMS allocation, so we are done. Thus, we assume every agent in $I$ is negative. Note that in this case, $I$ having SOP implies $o_m$ is a chore to every agent.
		
		For each agent $i$, let $\pi^i$ be an allocation that is MMS for $i$. If for some agent $i$, the allocation $\pi^i$ contains neither an empty bundle nor a singleton bundle, then the pigeonhole principle implies that $\pi^i$ contains $n-1$ bundles of size 2 and one bundle of size 3. By Lemma \ref{lemma:atmost2reduction}, $I$ admits an MMS allocation or there exists a valid reduction rule that allocates to $k$ agents at least $k$ items for some $k > 0$, so we are done.
		
		Otherwise, for every agent $i$, the allocation $\pi^i$ contains an empty bundle or a singleton bundle. For each $i$, let $\pi^i_\ell$ be a bundle in $\pi^i$ that is either empty or a singleton. We proceed to show that the reduction rule $R$ that allocates $o_m$ to agent $1$ is a valid reduction rule. Assume $o_m$ belongs to the bundle $\pi^1_1$ of $\pi^1$. Since agent 1 considers every item to be a chore, we have $u_1(o_m) \geq u_1(\pi^1_1) \geq u_1^\text{MMS}(I)$, so $R$ satisfies agent 1. It remains to show that $R$ preserves the MMS guarantee of every agent $i \neq 1$. Fix any agent $i \neq 1$. Suppose the bundle $\pi^i_\ell$ of $\pi^i$ is empty. Since $o_m$ is a chore to every agent, we have $u_i(o_m) \leq 0 = u_i(\pi^i_\ell)$. Otherwise, $\pi^i_\ell$ is a singleton. In this case, we again have $u_i(o_m) \leq u_i(\pi^i_\ell)$ because $I$ has SOP. By Proposition \ref{prop:1reduction_preserve}(1), $R$ preserves the MMS guarantee of agent $i$.
	\end{proof}
	
	\begin{theorem}\label{thm:main2}
		Let $I$ be an instance of the fair division problem. Suppose $m \leq n+5$ and $I$ contains only chores agents. Then, $I$ admits an MMS allocation.
	\end{theorem}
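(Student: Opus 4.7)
The plan is to proceed by strong induction on the number $n$ of agents. The base cases $n \leq 3$ are already handled by Proposition \ref{prop:2agents} (for $n = 2$) and Theorem \ref{thm:38} (for $n = 3$), and the trivial case $n = 1$. So I only need to handle the inductive step $n \geq 4$, assuming the result holds for all chores-only instances with fewer than $n$ agents and at most $n'+5$ items.

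Fix an instance $I$ with $n \geq 4$ agents, $m \leq n + 5$ items, and in which every agent is a chores agent. The key observation is that for $n \geq 4$, we have $m \leq n + 5 \leq 2n + 1$, so Proposition \ref{prop:choresagent} is applicable. It yields one of two outcomes: either $I$ already admits an MMS allocation, in which case we are done; or there exists a valid reduction rule $R$ that allocates at least $k$ items to $k$ agents for some $k > 0$. Let $I_R$ be the resulting $n_R$-agent $m_R$-item sub-instance, with $n_R = n - k < n$ and $m_R \leq m - k \leq (n+5) - k = n_R + 5$.

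To invoke the inductive hypothesis on $I_R$, I need to verify that $I_R$ still contains only chores agents. This is immediate because the remaining agents in $I_R$ retain their utility functions from $I$, and being a chores agent is a property of the utility function on individual items, which is unaffected by deleting items or other agents. Hence the inductive hypothesis yields an MMS allocation for $I_R$, and Proposition \ref{prop:validreduction} lifts this to an MMS allocation for $I$.

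The only subtlety — essentially the "hard part" — is that all the work has been offloaded to Proposition \ref{prop:choresagent}, which is where the structural argument actually takes place (using the existence of a chores agent, together with Lemmas \ref{lemma:1reduction} and \ref{lemma:atmost2reduction} and Theorem \ref{thm:main1}, to produce either an MMS allocation or a size-non-decreasing valid reduction). Given that proposition, the remaining proof is just a clean induction on $n$ with careful bookkeeping to check that $m_R \leq n_R + 5$ and that chores-only status is preserved under reduction.
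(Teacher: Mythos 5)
Your proof is correct and follows essentially the same route as the paper: base cases $n\leq 3$ via Proposition \ref{prop:2agents} and Theorem \ref{thm:38}, then induction on $n$ using Proposition \ref{prop:choresagent} (applicable since $n+5\leq 2n+1$ for $n\geq 4$), with the observations that $m_R\leq n_R+5$ and that the chores-only property is preserved, and finally Proposition \ref{prop:validreduction} to lift the allocation back. No gaps.
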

	\begin{proof}
		Let $I$ be an instance with $m \leq n+5$ and suppose $I$ contains only chores agents. If $n=2$ or $n=3$, then $I$ admits an MMS allocation by Proposition \ref{prop:2agents} or Theorem \ref{thm:38}. Assume $n \geq 4$ and that for all $n' < n$, any $n'$-agent $m'$-item instance $I'$ of the fair division problem with $m' \leq n'+5$ that contains only chores agents admits an MMS allocation.
		
		Since $n+5 \leq 2n+1$ for all $n \geq 4$, Proposition \ref{prop:choresagent} applies to $I$ and implies that $I$ admits an MMS allocation or there exists valid reduction rule $R$ that allocates to $k$ agents at least $k$ items for some $k > 0$. In the first case, we are done. Otherwise, let $I_R$ be the $n_R$-agent $m_R$-item instance obtained from $I$ by applying $R$. Clearly, $m_R \leq n_R+5$. Moreover, since reduction rules do not change the utilities of the remaining items to the remaining agents, a chores agent of $I$ that is not deleted by $R$ remains a chores agent in the instance $I_R$. In particular, $I_R$ contains only chores agents. By the induction hypothesis, $I_R$ admits an MMS allocation. By Proposition \ref{prop:validreduction}, the instance $I$ admits an MMS allocation.
	\end{proof}

	\section{Discussion}
	
	In this paper, we have studied the existence of MMS allocations for the mixed manna setting. Specifically, we showed that for an instance $I$ of the fair allocation problem, as long as $m \leq n+5$ and some auxiliary conditions hold (see Theorem \ref{thm:main}), then an MMS allocation exists. We also showed that these auxiliary conditions can be dropped if $n \leq 3$. Ultimately, it would be interesting to show whether or not these auxiliary conditions can still be eliminated for $n \geq 4$. Showing that they can be eliminated would imply a full generalization to the mixed manna setting.
	
	\begin{problem}\label{prob1}
		Let $I$ be an instance of the fair allocation problem such that $m \leq n+5$. Is it true that $I$ admits an MMS allocation?
	\end{problem}
	
	As an immediate consequence of Theorem \ref{thm:main}, the only remaining part of the puzzle is the instances with $n \geq 4$ that contain only negative mixed agents. We can assume they are mixed because negative non-mixed agents are chores agents, and we can use Proposition \ref{prop:choresagent} to handle chores agents.
	
	Handling negative mixed agents requires strategies of a different flavour than those used in the previous cases, and seems to be particularly difficult. A source of the difficulty is the fact that if an allocation $\pi^i$ is MMS for a negative mixed agent $i$, then it is possible that $\pi^i$ contains an empty bundle. As an example, consider an instance for which $n=4$, $m=9$, and $(u_1(o_1), u_1(o_2), \dots, u_1(o_9)) = (1, 1, 1, 1, 1, 1, -3, -3, -3)$. In this case, the only allocations that are MMS for agent 1 must have one empty bundle and three non-empty bundles, each of which contains two items of utility 1 and one item of utility -3. The presence of an empty bundle means that neither Lemma \ref{lemma:1reduction} nor Lemma \ref{lemma:atmost2reduction} applies. This is problematic because we can no longer use the valid reduction rules that we have obtained. It would be interesting to know if these remaining instances admit MMS allocations.
	
	\appendix
	
	\section{An Example}\label{appen:A}
	
	The following example illustrates a situation in which Proposition \ref{prop:delete_one_good} does not extend to the chores setting. We assume all agents have additive utility functions. We represent an instance $I$ using an $n \times m$ matrix $M(I)$ where each entry $m_{ij}$ takes the value $u_i(o_j)$. Since the utility functions are additive, the utility of any set of items can be inferred from the matrix.
	
	Consider the four instances $I, I^+_1, I^+_2, I^+_3$ represented by the matrices below. The MMS guarantee of agent 1 is shown on the right. Giving the fourth item to the second agent in any of $I^+_1, I^+_2, I^+_3$ yields $I$. Observe that the MMS guarantee of agent 1 decreases, stays the same, and increases if we do so in $I^+_1, I^+_2, I^+_3$, respectively.
	\begin{align*}
		M(I) &= \begin{bmatrix}
			-1 & -1 & -1
		\end{bmatrix} &u_1^\text{MMS}(I) &= -3 \\
		M(I^+_1) &= \begin{bmatrix}
			-1 & -1 & -1 & -1 \\
			-1 & -1 & -1 & -1
		\end{bmatrix} &u_1^\text{MMS}(I^+_1) &= -2 \\
		M(I^+_2) &= \begin{bmatrix}
			-1 & -1 & -1 & -3 \\
			-1 & -1 & -1 & -3
		\end{bmatrix} &u_1^\text{MMS}(I^+_2) &= -3 \\
		M(I^+_3) &= \begin{bmatrix}
			-1 & -1 & -1 & -4 \\
			-1 & -1 & -1 & -4
		\end{bmatrix} &u_1^\text{MMS}(I^+_3) &= -4
	\end{align*}
	
	\section{Proof of Proposition \ref{prop:39}}\label{appen:B}
	
	We include below the proof of Proposition \ref{prop:39} due to \citet{feige2021tight} for completeness. For each agent $i \in [3]$, we fix an allocation $\pi^i$ that is MMS for $i$.
	
	\propthreenine*
	\begin{proof}
		Assume $I$ has same-order preference. For each agent $i$, let $\pi^i = (\pi^i_1, \pi^i_2, \pi^i_3)$ be an allocation that is MMS for $i$.
		
		Suppose that for a pair of agents $i, j$, there exist two bundles $\pi^i_a$ and $\pi^j_b$ that are identical. Without loss of generality, assume $\pi^1_1 = \pi^2_1$. By the additivity of $u_1$, agent $1$ is satisfied with at least one of the other two bundles of $\pi^2$. Hence, agent $1$ is satisfied with two bundles of $\pi^2$ (including $\pi^2_1$), and agent $2$ is satisfied with three bundles of $\pi^2$ by definition. Thus, using the picking order $3, 1, 2$ results in an MMS allocation.
		
		Otherwise, suppose that for a pair of agents $i, j$, there exist two bundles $\pi^i_a$ and $\pi^j_b$ that share exactly two items, that is, they differ in exactly one item. Without loss of generality, assume this is true for bundles $\pi^1_1$ and $\pi^2_1$. Since $I$ has same-order preference, every agent agrees on which of these two bundles is at least as valuable as the other. Assume $u_\ell(\pi^2_1) \geq u_\ell(\pi^1_1)$ for every agent $\ell \in [3]$. Since $u_2(\pi^2_1) \geq u_2(\pi^1_1)$, the additivity of $u_2$ implies that $u_2(\pi^1_2 \cup \pi^1_3) \geq u_2(\pi^2_2 \cup \pi^2_3) \geq 2u_2^\text{MMS}(I)$. This allows us to assume $u_2(\pi^1_2) \geq u_2^\text{MMS}(I)$ without loss of generality.
		
		If $u_3(\pi^1_1) \geq u_3^\text{MMS}(I)$, giving $\pi^1_1$ to agent 3, $\pi^1_2$ to agent 2, and $\pi^1_3$ to agent 1 results in an MMS allocation. Otherwise, $u_3(\pi^1_1) < u_3^\text{MMS}(I)$. In this case, we first give $\pi^1_1$ to agent 1, satisfying them. Without loss of generality, assume the item $o_x \in \pi^1_1 \setminus \pi^2_1$ is in the bundle $\pi^2_2$ of $\pi^2$. Agent 2 is able to partition the remaining items into two bundles that both satisfy agent 2, by replacing the item $o_x$ in $\pi^2_2$ with the item $o_y \in \pi^2_1 \setminus \pi^1_1$ (which is at least as good as $o_x$). One of these two resulting bundles must satisfy agent 3 because $\frac{u_3(\mathcal{M}) - u_3(\pi^1_1)}{2} \geq u_3^\text{MMS}(I)$. Thus, an MMS allocation can be obtained by letting agent 3 pick first from these two bundles.
		
		Finally, we consider the remaining case where the intersection between each pair of bundles $\pi^i_a, \pi^j_b$ where $i \neq j$ contains exactly one item. Without loss of generality, assume the least valuable item $o_9$ is in $\pi^i_1$ for each $i \in [3]$. Except for containing $o_9$, the three bundles $\pi^i_1$ are disjoint. Hence, there are two items $o_x$ and $o_y$ that are in none of these bundles, each having utility at least that of $o_9$ to every agent. Giving $\pi^3_1$ to agent 3, $(\pi^2_1 \setminus \{o_9\}) \cup \{o_x\}$ to agent 2, and $(\pi^1_1 \setminus \{o_9\}) \cup \{o_y\}$ to agent 1 results in an MMS allocation.
	\end{proof}



\begin{ack}
	We acknowledge the support of the Natural Sciences and Engineering Research Council of Canada (NSERC), funding reference number RGPIN-2022-04518. We also thank Halvard Hummel for constructive comments on the organization of this paper.
\end{ack}



\bibliography{mybibfile}

\begin{thebibliography}{13}
\providecommand{\natexlab}[1]{#1}
\providecommand{\url}[1]{\texttt{#1}}
\expandafter\ifx\csname urlstyle\endcsname\relax
  \providecommand{\doi}[1]{doi: #1}\else
  \providecommand{\doi}{doi: \begingroup \urlstyle{rm}\Url}\fi

\bibitem[Akrami and Garg(2024)]{akrami2024breaking}
H.~Akrami and J.~Garg.
\newblock Breaking the 3/4 barrier for approximate maximin share.
\newblock In \emph{Proceedings of the 2024 Annual ACM-SIAM Symposium on
  Discrete Algorithms (SODA)}, pages 74--91. SIAM, 2024.

\bibitem[Amanatidis et~al.(2022)Amanatidis, Birmpas, Filos-Ratsikas, Voudouris,
  et~al.]{amanatidis2022fair}
G.~Amanatidis, G.~Birmpas, A.~Filos-Ratsikas, A.~Voudouris, et~al.
\newblock Fair division of indivisible goods: A survey.
\newblock In \emph{IJCAI International Joint Conference on Artificial
  Intelligence}, pages 5385--5393. International Joint Conferences on
  Artificial Intelligence, 2022.

\bibitem[Barman and Verma(2021)]{barman2021existence}
S.~Barman and P.~Verma.
\newblock Existence and computation of maximin fair allocations under
  matroid-rank valuations.
\newblock In \emph{Proceedings of the 20th International Conference on
  Autonomous Agents and MultiAgent Systems}, pages 169--177, 2021.

\bibitem[Bouveret and Lema{\^\i}tre(2016)]{BL16}
S.~Bouveret and M.~Lema{\^\i}tre.
\newblock Characterizing conflicts in fair division of indivisible goods using
  a scale of criteria.
\newblock \emph{Autonomous Agents and Multi-Agent Systems}, 30\penalty0
  (2):\penalty0 259--290, 2016.

\bibitem[Budish(2011)]{budish2011combinatorial}
E.~Budish.
\newblock The combinatorial assignment problem: Approximate competitive
  equilibrium from equal incomes.
\newblock \emph{Journal of Political Economy}, 119\penalty0 (6):\penalty0
  1061--1103, 2011.

\bibitem[Caragiannis et~al.(2019)Caragiannis, Kurokawa, Moulin, Procaccia,
  Shah, and Wang]{caragiannis2019unreasonable}
I.~Caragiannis, D.~Kurokawa, H.~Moulin, A.~D. Procaccia, N.~Shah, and J.~Wang.
\newblock The unreasonable fairness of maximum nash welfare.
\newblock \emph{ACM Transactions on Economics and Computation (TEAC)},
  7\penalty0 (3):\penalty0 1--32, 2019.

\bibitem[Feige et~al.(2021)Feige, Sapir, and Tauber]{feige2021tight}
U.~Feige, A.~Sapir, and L.~Tauber.
\newblock A tight negative example for mms fair allocations.
\newblock In \emph{International Conference on Web and Internet Economics},
  pages 355--372. Springer, 2021.

\bibitem[Gourv{\`e}s et~al.(2014)Gourv{\`e}s, Monnot, and
  Tlilane]{gourves2014near}
L.~Gourv{\`e}s, J.~Monnot, and L.~Tlilane.
\newblock Near fairness in matroids.
\newblock In \emph{ECAI}, volume~14, pages 393--398, 2014.

\bibitem[Hosseini et~al.(2023)Hosseini, Sikdar, Vaish, and
  Xia]{hosseini2023fairly}
H.~Hosseini, S.~Sikdar, R.~Vaish, and L.~Xia.
\newblock Fairly dividing mixtures of goods and chores under lexicographic
  preferences.
\newblock In \emph{Proceedings of the 2023 International Conference on
  Autonomous Agents and Multiagent Systems}, pages 152--160, 2023.

\bibitem[Huang and Segal-Halevi(2023)]{huang2023reduction}
X.~Huang and E.~Segal-Halevi.
\newblock A reduction from chores allocation to job scheduling.
\newblock In \emph{Proceedings of the 24th ACM Conference on Economics and
  Computation}, pages 908--908, 2023.

\bibitem[Hummel(2023)]{hummel2023lower}
H.~Hummel.
\newblock On lower bounds for maximin share guarantees.
\newblock In \emph{Proceedings of the Thirty-Second International Joint
  Conference on Artificial Intelligence}, pages 2747--2755, 2023.

\bibitem[Kulkarni et~al.(2021)Kulkarni, Mehta, and
  Taki]{kulkarni2021indivisible}
R.~Kulkarni, R.~Mehta, and S.~Taki.
\newblock Indivisible mixed manna: On the computability of mms+ po allocations.
\newblock In \emph{Proceedings of the 22nd ACM Conference on Economics and
  Computation}, pages 683--684, 2021.

\bibitem[Kurokawa et~al.(2016)Kurokawa, Procaccia, and Wang]{KPW16}
D.~Kurokawa, A.~D. Procaccia, and J.~Wang.
\newblock When can the maximin share guarantee be guaranteed?
\newblock In \emph{Thirtieth AAAI Conference on Artificial Intelligence}, 2016.

\end{thebibliography}

\end{document}